\newcommand{\expec}[1]{\langle #1\rangle}
\newcommand{\Tr}{\mathrm{Tr}\,} 
 \newtheorem{definition}{Definition}
 \newtheorem{theorem}{Theorem}
  \newtheorem{observation}{Observation}
 \newtheorem{lemma}{Lemma}
 \newtheorem{corollary}{Corollary}
  \newtheorem{conjecture}{Conjecture}
\begin{document}

\title{Continuous-variable supraquantum nonlocality} 
\author{Andreas Ketterer}\email{andreas.ketterer@uni-siegen.de}
\thanks{\\A.K. and A.L.-F. contributed equally.}

\affiliation{Naturwissenschaftlich-Technische Fakult\"at, Universit\"at Siegen, Walter-Flex-Str. 3, 57068 Siegen, Germany}
\affiliation{Laboratoire Mat\'eriaux et Ph\'enom\`enes Quantiques, Sorbonne Paris Cit\'e, Universit\'e Paris Diderot, CNRS UMR 7162, 75013 Paris, France}

\author{Adrien Laversanne-Finot}\email{adrien.laversanne-finot@univ-paris-diderot.fr}
\affiliation{Laboratoire Mat\'eriaux et Ph\'enom\`enes Quantiques, Sorbonne Paris Cit\'e, Universit\'e Paris Diderot, CNRS UMR 7162, 75013 Paris, France}

\author{Leandro Aolita}\email{aolita@if.ufrj.br}
\affiliation{Instituto de F\'{\i}sica, Universidade Federal do Rio de Janeiro, Caixa Postal 68528, 21941-972 Rio de Janeiro, RJ, Brazil}
 \affiliation{{ICTP South American Institute for Fundamental Reserch
Instituto de F\'isica Te\'orica, UNESP-Universidade Estadual Paulista R. Dr. Bento T. Ferraz 271, Bl. II, S\~ao Paulo 01140-070, SP, Brazil}}

\date{\today} 

\begin{abstract}
Supraquantum nonlocality refers to correlations that are more nonlocal than allowed by quantum theory but still physically conceivable in post-quantum theories, in the sense of respecting the basic no-faster-than-light communication principle. While supraquantum correlations are relatively well understood for finite-dimensional systems, little is known in the infinite-dimensional case.
Here, we study supraquantum nonlocality for bipartite systems with two measurement settings and infinitely many outcomes per subsystem. We develop a formalism for generic no-signaling black-box measurement devices with continuous outputs in terms of probability measures, instead of  probability distributions, which involves a few technical subtleties. We show the existence of a class of supraquantum Gaussian correlations, which violate the Tsirelson bound of an adequate continuous-variable Bell inequality. We then introduce the continuous-variable version of the celebrated Popescu-Rohrlich (PR) boxes, as a limiting case of the above-mentioned Gaussian ones. Finally, we perform a  characterisation of the geometry of the set of continuous-variable no-signaling correlations. Namely, we show that that the convex hull of the continuous-variable PR boxes is dense in the no-signaling set. {We also show that these boxes are extreme in the set of no-signaling behaviours and {provide} evidence {suggesting} that they are indeed the only extreme points of the no-signaling set.} Our results lay the grounds for studying generalized-probability theories in continuous-variable systems.
\end{abstract}

\pacs{03.65.Ud, 03.65.Ta}

\maketitle 
%

%

\section{Introduction}
\emph{Bell nonlocality} refers to correlations incompatible with local hidden-variable theories \cite{BELL}, which explain correlations between space-like separated measurement outcomes as due exclusively to past common causes. 
Since the pioneering works of Bell  \cite{BELL}, and of Clauser, Horn, Shimony and Holt \cite{CHSH}, it is known that quantum mechanics admits Bell nonlocality, i.e., that local measurements on quantum entangled states produce Bell nonlocal correlations.
However, nonlocality is not a phenomenon exclusive  of quantum theory. Hypothetical supraquantum theories satisfying the basic \emph{no-signaling principle} of no-faster-than-light communication, in consistency with special relativity, can produce Bell correlations that are even more nonlocal than those compatible with quantum theory. This is generally referred to as \emph{supraquantum Bell nonlocality}.
The first known example thereof was the so-called Popescu-Rohrlich (PR) boxes \cite{PRboxes}. These are hypothetic black-box measurement devices that can violate the Clauser-Horn-Shimony-Holt inequality up to its algebraic maximum of 4, which is above the maximum value of $2\sqrt{2}$ attained by quantum correlations, known as Tsirelson's bound \cite{Cirelson}.

Importantly, the aim of studying supraquantum nonlocality is by no means  to question the validity of quantum mechanics, but, rather on the contrary,  actually to gain a better understanding of quantum nonlocality itself. For instance, even though unphysical, PR boxes make excellent \emph{units of Bell nonlocality}, serving, in fact, as reference to quantify the nonlocal weight of quantum correlations \cite{EPR2, BKP06}. 
Furthermore, understanding why quantum mechanics is not as nonlocal as allowed by the no-signalling principle gives us valuable insights with foundational implications on the very axiomatic structure of quantum theory. For instance, a seminal result in this direction was the realization that  the physical existence of PR boxes would make communication complexity problems trivial \cite{vanDam2000,vanDam2005,Brassard06}, which is a highly implausible possibility. Hence, if one accepts that communication complexity is not trivial as a postulate, the non-existence of PR boxes is implied. In fact, in a similar spirit, a large effort has been devoted to proposing physically reasonable postulates from which Tsirelson's bound can be derived from first principles (see, e.g., Refs. \cite{Pawlowski09, Navascues09, OW10, Gallego11,Fritz13}). 

PR boxes have been generalized to arbitrary finite numbers of measurement outcomes \cite{PRABoxesDdimesnions} and to multipartite systems as well \cite{PRmultipartite}.
What is more, in the multipartite scenario, non-trivial tight Bell inequalities are known without a quantum violation, i.e. for which the  quantum maximum coincides with the local one and is below the no-signalling one \cite{Almeida10}.
In addition, supraquantum nonlocality has been explored even in the bipartite scenario where only one part makes measurements \cite{Sainz15}.
From a broader perspective, Bell nonlocality in generalised probabilistic theories has been extensively studied in the finite-dimensional case (see \cite{ReviewNonlocality} and Refs. therein).
Nevertheless, in striking contrast, essentially nothing is known about supraquantum nonlocality in continuous-variable (CV) systems. On the one hand, this is surprising in view of the huge amount of work on CV quantum nonlocality (see, e.g., Refs. \cite{Grangier, Tan, Gilchrist, Munro,LeeJaksch,CFRD, CFRDmultiSet, SignBinning, RootBinning, CFRDSalles, CFRDSallesLong}) and the importance of CV systems for quantum information processing \cite{Braunstein05, Ferraro05,Weedbrook12}. On the  other hand, this is at the same time understandable because, for CV systems, the set of local correlations (as well as that of no-signalling ones) is a generic convex set, instead of  a (computationally much tamer) convex polytope as in finite-dimensional systems \cite{Zukowski99,LinProgramBellineq, Masanes}.

In this article we conduct an exploration of CV supraquantum nonlocality. To begin with, we develop a formalism to deal with generic no-signalling black-box measurement devices with discrete measurement settings (inputs) and CV measurement outcomes (outputs). The correlations produced by such devices are described by probability measures instead of probability distributions. We then show the existence of a class of supraquantum Gaussian PR boxes, for bipartite systems with dichotomic inputs and real, continuous outputs. This {is} done by showing that these behaviours violate the Calvalcanti-Foster-Reid-Drummond (CFRD) inequality \cite{CFRD}, which admits no quantum violation in the bipartite case \cite{CFRDSallesLong}. Next, we introduce, a limiting case of the supraquantum Gaussian behaviours, a hierarchy of CV PR boxes, whose ground level consists of local, deterministic points and the upper levels of nonlocal, non-deterministic ones. The CV PR boxes obtained are very similar in structure to the finite-dimensional ones. {To end up with, we perform a characterization of the set of CV no-signaling behaviours and show that all CV PR boxes are extreme points of the CV no-signaling set, and that their convex hull (i.e. the set of all finite convex sums) is dense therein. In particular, we discuss whether the CV PR boxes are the only extreme no-signaling behaviours and, along with some evidence, conjecture that this is indeed the case.} 

The paper is structured as follows. In Sec. \ref{sec:preliminaries}, we set up the mathematical framework for CV no-signaling behaviours based on probability measures. In Sec.~\ref{sec:CVnonlocalBoxes}, we introduce the supraquantum Bell nonlocal Gaussian behaviours and the CV PR boxes. Sec.~\ref{sec:Characterization} is devoted to the geometrical characterization of the set of CV no-signaling set. Finally, we conclude, in Sec.~\ref{sec:Conclusion}, with some final remarks and perspectives of our work.

\section{Preliminaries: mathematical representation of CV Bell correlations}
\label{sec:preliminaries}

We consider a bipartite Bell experiment where two space-like separated observers, conventionally referred to as Alice (A) and Bob (B), make measurements. We work in the generic device-independent scenario where the measurement apparatuses are treated as unknown black-box measurement devices [see Fig.~\ref{fig:AliceBobSets}(a)]. Alice's (Bob's) device has a dichotomic input  $x$ ($y$) $\in\{0,1\}$ and a continuous output $a$ ($b$) $\in\mathbb R$. %
That is, we are considering infinite resolution:  we want to investigate the ideal situation where the outputs can take any arbitrary real value.
The statistics produced by such devices is most conveniently described in terms of probability measures, which we briefly recap in what follows.
We consider probability spaces defined by a triple $\{\Omega,\mathcal B(\Omega),\mu\}$, where $\Omega$ denotes a sample space, $\mathcal B(\Omega)$ the Borel $\sigma$-algebra of events on $\Omega$ 
(i.e., the smallest $\sigma$-algebra that contains all open subsets of $\Omega$) 
and $\mu:\mathcal B(\Omega)\rightarrow [0,1]$ a Borel probability measure. In our case, the sample space is given by a product space $\Omega=\Omega_\mathrm{A}\times\Omega_\mathrm{B}$, with $\Omega_\mathrm{A}=\Omega_\mathrm{B}=\mathbb R$, where the first and second factors, $\Omega_\mathrm{A}$ and $\Omega_\mathrm{B}$, correspond to the outputs of A and B, respectively.
The probability measure $\mu$ is required to  
be normalized, $\mu(\mathbb R\times \mathbb R)=1$, 
and to satisfy the additivity property $\mu\left(\bigcup_{i=1} E_i\right)=\sum_{i=1} \mu(E_i)$, for every countable sequence $\{E_i\}_i$ of disjoint events $E_i\in\mathcal B(\mathbb R\times \mathbb R)$, where $\cup$ stands for the set union. The probability of an event $E\in \mathcal B(\mathbb R\times \mathbb R)$ is then given by $P(E):=\mu(E)$. 
We denote the set of all probability measures on $\mathcal B(\mathbb R\times \mathbb R)$ as $\mathcal{M}_{\mathbb R\times\mathbb R}$. 

\begin{figure}[t]
\begin{center}
\includegraphics[width=0.47\textwidth]{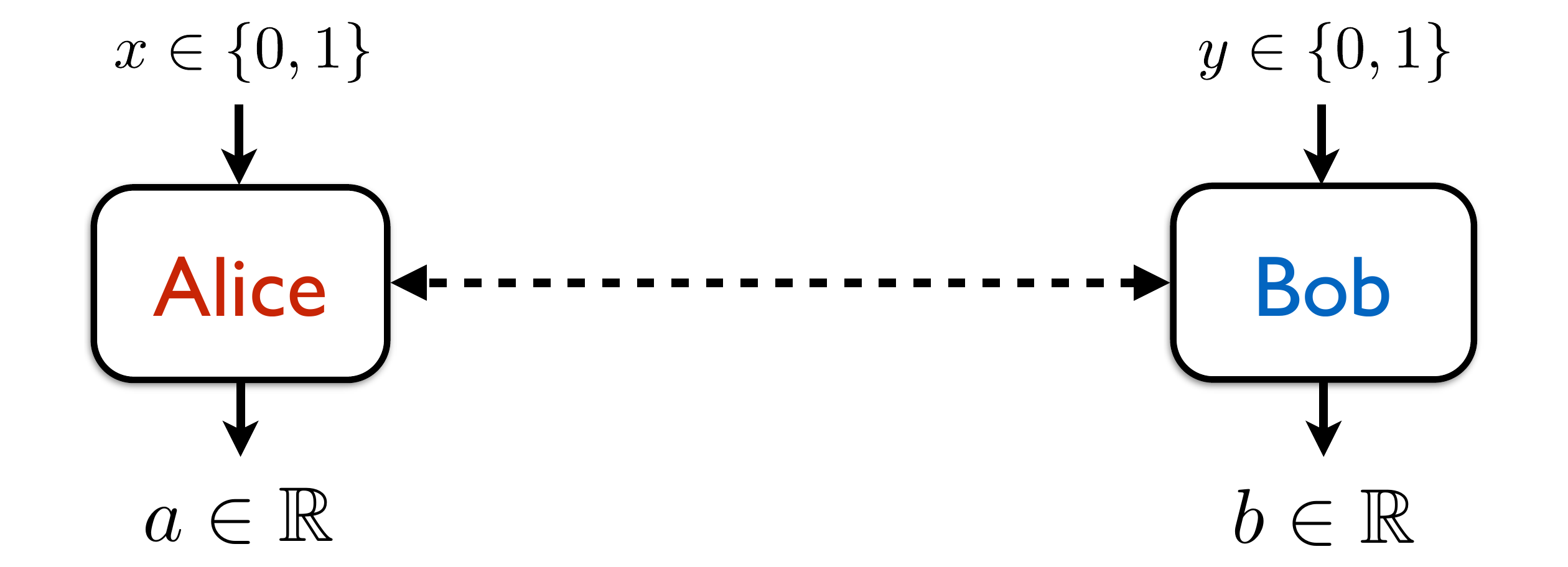}
\end{center}
\caption{(Color online) Schematic representation of a bipartite Bell experiment with continuous measurement outcomes in the so-called device-independent scenario of black-box measurement instruments. 
Two space-like separated observers, Alice (A) and (Bob), perform local measurements on their subsystems with dichotomic measurements choices (inputs) $x$ and $y$, respectively, and obtain continuous-variable measurement outcomes (outputs) $a$ and $b$.}
\label{fig:AliceBobSets}
\end{figure}

The connection between a probability measure $\mu$ and a probability density $p$ (with respect to the Lebesgue measure) can be made explicit in the integral representation
\begin{align}
\mu(A\times B):=\int_{A\times B} d\mu(a',b')=\int_{A} \int_{B} p(a',b')\,da'\,db',
\label{eq:DefMeasFromDensity}
\end{align}
where $A\times B\in \mathcal B(\mathbb R\times \mathbb R)$, with $A, B\in \mathcal B(\mathbb R)$, $p(a',b')$ denotes the corresponding probability density to $\mu$, and $d\mu(a,b)$ and $da'\,db'$ refer to integrations with respect to $\mu$ and the Lebesgue measure on $\mathbb R\times \mathbb R$, respectively. 
Note that not every probability measure can be expressed in terms of a probability density as in Eq.~(\ref{eq:DefMeasFromDensity}). The question of the existence of a probability density is answered by the Radon-Nikodym (RN) theorem{,} whose statement {is} briefly review{ed} in Appendix~\ref{app:Radon}. While most assumptions of the RN theorem are fulfilled by any probability measure on $\mathbb R\times \mathbb R$, for us the crucial prerequisite is that $\mu$ has to be absolutely continuous with respect to the Lebesgue measure. However, as we will see later on, absolute continuity cannot be guaranteed for all types of probability measures which will become important when dealing with so-called boxes describing {idealized} unphysical outcome scenarios. 
Hence, all in all{,} it is {both} more general {and} more convenient to work with measures, as one {needs} no{t w}orry about the existence of a density.  

We thus arrive at the following definition.
\begin{definition}[CV Bell behaviour] 
A behaviour is a joint conditional probability measure represented by a $2\times2$ matrix $\boldsymbol\mu=\{{\mu_{x,y}}\}_{x,y\in\{0,1\}}$ with arbitrary probability measures $[\boldsymbol\mu]_{x,y}:={\mu_{x,y}}\in \mathcal{M}_{\mathbb R\times\mathbb R}$ as entries. The set of all behaviours is denoted as $\mathcal{M}_{\mathbb R\times\mathbb R}^4$. 
\end{definition}

Note that, for finite-dimensional systems, the sample space has a finite number of events, so that joint conditional probability measures reduce to the more usual notion of joint conditional probability distributions \cite{ReviewNonlocality}. 
Also as in the discrete case, since the observers are space-like separated, $\boldsymbol\mu$ must fulfill the no-signaling principle, given, in this language, by the constraints:
\begin{subequations}
\label{eq:NS_constraints}
\begin{align}
{\mu_{x,y}}(A\times \mathbb R) &= \mu_{x,\overline{y}}(A \times \mathbb R) \quad \forall \ x\in\{0,1\},   \label{eq:constraint1} \\
{\mu_{x,y}}(\mathbb R\times B) &= \mu_{\overline{x},y}(\mathbb R\times B) \quad \forall \ y\in\{0,1\},  \label{eq:constraint2}
\end{align}
\end{subequations}
for all $A, B\in \mathcal B(\mathbb R)$, where $\overline{y}=y\oplus1$ and  $\overline{x}=x\oplus1$, with $\oplus$ the sum modulo 2. 

Conditions~(\ref{eq:constraint1}) and (\ref{eq:constraint2}) imply respectively that Alice's and Bob's marginal measures $\mu_{x}(A):={\mu_{x,y}}(A\times \mathbb R)$ and $\mu_{y}(B):={\mu_{x,y}}(\mathbb R\times B)$ are independent of each others' input, which prevents signaling. 
We call any  $\boldsymbol\mu$ satisfying these conditions a \emph{no-signaling behaviour}, and denote the set of all no-signaling behaviours by  $\mathcal{M}_\mathrm{NS}\subset\mathcal{M}_{\mathbb R\times\mathbb R}^4$. 

Quantum correlations, in turn, are those described by the behaviours that can be
expressed as 
\begin{align}
 \label{eq:quant_Behav}
 {\mu_{x,y}}(A\times B) &= \Tr\left[M_x{(A)}\otimes M_y{(B)}\,\varrho_{AB} \right], 
\end{align}
for all $A, B\in \mathcal B(\mathbb R)$, where $\rho_{AB}$ is an arbitrary bipartite quantum state on a Hilbert space $\mathcal H:=\mathcal H_\mathrm{A}\otimes\mathcal H_\mathrm{B}$, with $\mathcal H_\mathrm{A}$ and $\mathcal H_\mathrm{B}$ the local Hilbert spaces of Alice's and Bob's systems, respectively, and $M_x$ and $M_y$ are,  for all $x\ (y)\in\{0,1\}$,  semi-spectral measures, also known as positive-operator valued measures (POVMs) \cite{QuantumMeasurement}.  
The latter means that $M_x,M_y:\mathcal B(\mathbb R)\rightarrow \mathcal L_{\geq 0}(\mathcal H)$ are maps such that, for all $A\ (B)\in \mathcal B(\mathbb R)$, $M_x{(A)}\in\mathcal L_{\geq 0}(\mathcal H_\mathrm{A})$ [$M_y{(B)}\in\mathcal L_{\geq 0}(\mathcal H_\mathrm{B})$], with $\mathcal L_{\geq 0}(\mathcal H_\mathrm{A})$ [$\mathcal L_{\geq 0}(\mathcal H_\mathrm{B})$] the space of positive semi-definite operators on $\mathcal H_\mathrm{A}$ ($\mathcal H_\mathrm{B}$); and that $M_x(\mathbb R)=\mathbb 1_\mathrm{A}$ ($M_y(\mathbb R)=\mathbb 1_\mathrm{B}$), with $\openone_\mathrm{A}$ ($\openone_\mathrm{B}$) the identity operator on $\mathcal H_\mathrm{A}$ ($\mathcal H_\mathrm{B}$). 
We call any  $\boldsymbol\mu$ satisfying Eq. \eqref{eq:quant_Behav} a \emph{quantum behaviour}, and denote the set of all quantum behaviours by $\mathcal{M}_\mathrm{Q}$. For generic Bell scenarios, the relationship $\mathcal{M}_\mathrm{Q}\subseteq\mathcal{M}_\mathrm{NS}$ holds. For the scenario under consideration here, we show below that $\mathcal{M}_\mathrm{Q}\subset\mathcal{M}_\mathrm{NS}$. We call any $\boldsymbol{\mu}\in\mathcal{M}_\mathrm{NS}\setminus\mathcal{M}_\mathrm{Q}$ a \emph{supraquantum behaviour}.

\begin{figure}[t]
\begin{center}
\includegraphics[width=0.47\textwidth]{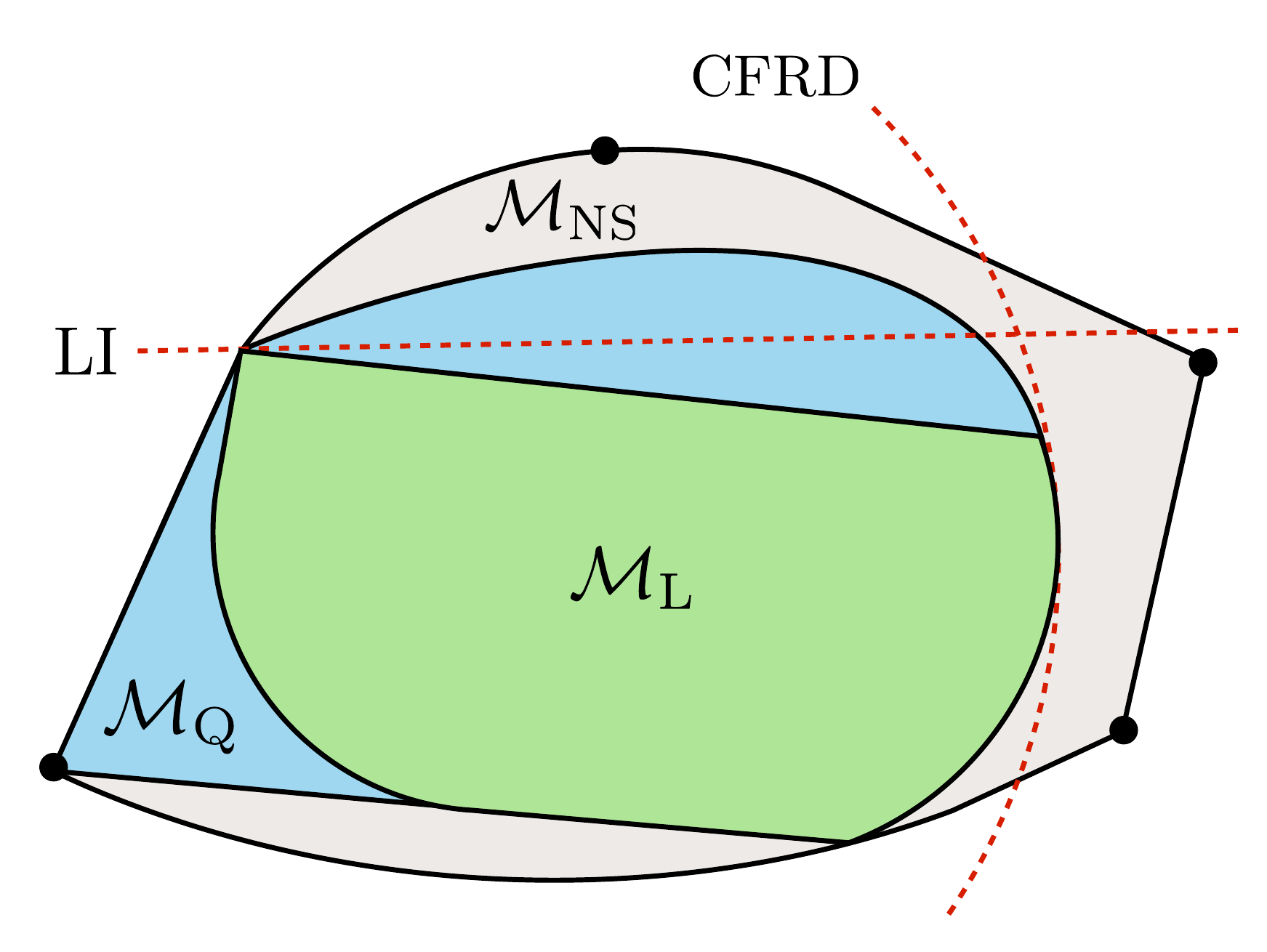}%
\end{center}
\caption{(Color online) Pictorial {(not rigorous)} geometrical representation of the (possible) inner structure of the set $\mathcal{M}_\mathrm{NS}$ of CV no-signaling behaviours in the Bell scenario of Fig. \ref{fig:AliceBobSets}.  $\mathcal{M}_\mathrm{NS}$ contains the set $\mathcal{M}_\mathcal Q$ of quantum behaviours, which contains, in turn, the set $\mathcal{M}_\mathcal L$ of local behaviours. 
All three sets are generic convex sets with infinitely many extreme points, delimited by facets as well as curved hyper-surfaces.
This is in contrast with the finite-dimensional case, where both  $\mathcal{M}_\mathrm{NS}$ and $\mathcal{M}_\mathcal L$ are convex polytopes, delimited exclusively by facets that can by characterisied by a finite number of linear Bell inequalities. In the plot, an example of a linear Bell inequality is represented as a straight line LI. 
Such linear inequality can, e.g., correspond to a Bell inequality for finite-dimensional systems, which can be violated by CV quantum correlations using so-called binning procedures \cite{Grangier, Tan, Gilchrist, Munro, SignBinning, RootBinning, LeeJaksch} (see also Refs. in \cite{ReviewNonlocality}).
Besides this, a hypothetical quantum extreme point is shown in the figure (light-blue corner). While such points are in principle possible, no explicit example thereof is known.
In this paper we consider a non-linear Bell inequality, the CFRD inequality \cite{CFRD}, represented as a curve in the plot. This inequality applies in the genuinely CV scenario of our interest and has, additionally, the appealing feature of admitting violations only by supraquantum behaviours (see Sec. \ref{sec:CVnonlocalBoxes}). Finally, four exemplary CV PR boxes are represented as extreme points of $\mathcal{M}_\mathrm{NS}$ (black dots).
}
\label{fig:NS_correlations}
\end{figure}

The last important class for our purposes is the one of classical correlations, described by the behaviours produced by local hidden-variable models:
\begin{align}
&\mu_{x,y}=\int_\Lambda \delta_{a(x,\lambda),b(y,\lambda)}\, d\eta(\lambda),
\label{eq:localbehaviours}
\end{align}
where $\lambda$ is the hidden variable, taking values in a parameter space $\Lambda$ according to a probability measure $\eta:\mathcal B(\Lambda)\rightarrow \mathbb R_{\geq0}$, and $\delta_{a(x,\lambda),b(y,\lambda)}$ is the CV version of the $\lambda$-th local deterministic response function. More precisely,  $\delta_{a,b}$ denotes the Dirac measure at the point $(a,b)\in\mathbb R^2$, i.e. the deterministic measure such that
\begin{align}
\delta_{a,b}(A\times B):=\left\{
                \begin{array}{lll}
                  1 & &\text{if } a \in A\text{ and } b\in B, \\
                  0 & &\mathrm{otherwise},
                \end{array}
                \right.
\label{eq:DiracMeasure}
\end{align}
for all $A, B\in \mathcal B(\mathbb R)$. In turn, for each $\lambda\in\Lambda$, $a(x,\lambda)$ and $b(y,\lambda)$ are respectively deterministic functions of $x$ and $y$, in a similar spirit to the local deterministic response functions in finite-dimensional scenarios \cite{ReviewNonlocality}. Since the outputs are locally generated from each input and the pre-established classical correlations encoded in $\lambda$, one typically calls any $\boldsymbol\mu$ given by Eq. \eqref{eq:localbehaviours} a \emph{local behaviour}. We denote the set of all local behaviours by $\mathcal{M}_\mathrm{L}\subseteq\mathcal{M}_\mathrm{Q}$. In turn, any $\boldsymbol{\mu}\in\mathcal{M}_\mathrm{NS}\setminus\mathcal{M}_\mathrm{L}$ is a \emph{nonlocal behaviour}.

Finally, we emphasise that, in contrast to the finite-dimensional case, $\mathcal{M}_\mathrm{L}$ does not define a polytope (i.e., a convex set with finitely many extreme points), see Fig. \ref{fig:NS_correlations}. This is due to the fact that Dirac measures are extreme in $\mathcal{M}_\mathrm{NS}$ and $\mathcal{M}_\mathrm{L}$ is generated by a continuously infinite number of them. It follows, then, that $\mathcal{M}_\mathcal L$ cannot be characterized by a finite set of linear Bell inequalities \cite{Zukowski99,LinProgramBellineq,Masanes}. In the next section, we use a non-linear Bell inequality to identify not only nonlocal behaviours but supraquantum ones.

\section{Continous-variable supraquantum nonlocality}
\label{sec:CVnonlocalBoxes}
In Ref.~\cite{CFRD}, Calvalcanti, Foster, Reid, and Drummond derived the nonlinear Bell inequality 
\begin{align}
&\left[\expec{A_0\,B_0}-\expec{A_1\,B_1}\right]^2  +\left[\expec{A_0\,B_1}+\expec{A_1\,B_0}\right]^2 \nonumber \\ 
&\leq \expec{A_0^{2}\,B_0^{2}}+\expec{A_0^{2}\,B_1^{2}}+\expec{A_1^{2}\,B_0^{2}}+\expec{A_1^{2}\,B_1^{2}},
\label{eq:CFRD}
\end{align}
where $A_0$ and $A_1$ ($B_0$ and $B_1$) are the real, continuous outputs of Alice's (Bob's) box for the inputs 0 and 1, respectively. Using the integral representation of Eq. \eqref{eq:DefMeasFromDensity}, the expectation values of such observables appearing in the inequality can be recast as cross-moments of the behaviour elements ${\mu_{x,y}}$:
\begin{align}
\expec{A_x^{n_a}\,B_y^{n_b}}=\int_{\mathbb R^2} a^{n_a}\, b^{n_b}\,  d{\mu_{x,y}}(a,b).
\label{eq:CrossMoments}
\end{align}
Eq.~(\ref{eq:CFRD}) can be generalised to higher number of parties~\cite{CFRD} as well as observables per party~\cite{CFRDmultiSet}. We refer to the bipartite dichotomic-input version of inequality, given by Eqs. \eqref{eq:CFRD} and \eqref{eq:CrossMoments}, as the \emph{CFRD inequality}. The  inequality has a number of interesting properties~\cite{CFRD,CFRDmultiSet}. Specially relevant for our purposes is the fact that it cannot be violated by any quantum bebaviour. This was first shown in Ref. \cite{CFRDSalles} for the restricted case of measurements of (quantum) phase-space quadrature operators, and then extended to the general case of arbitrary quantum measurements in Ref.  \cite{CFRDSallesLong}. Hence, the CFRD constitutes a non-trivial Bell inequality with no quantum violation. Any no-signalling behaviour that violates it is thus automatically certified as supraquantum, as we do next.

\begin{figure}[t]
\begin{center}
\includegraphics[width=0.485\textwidth]{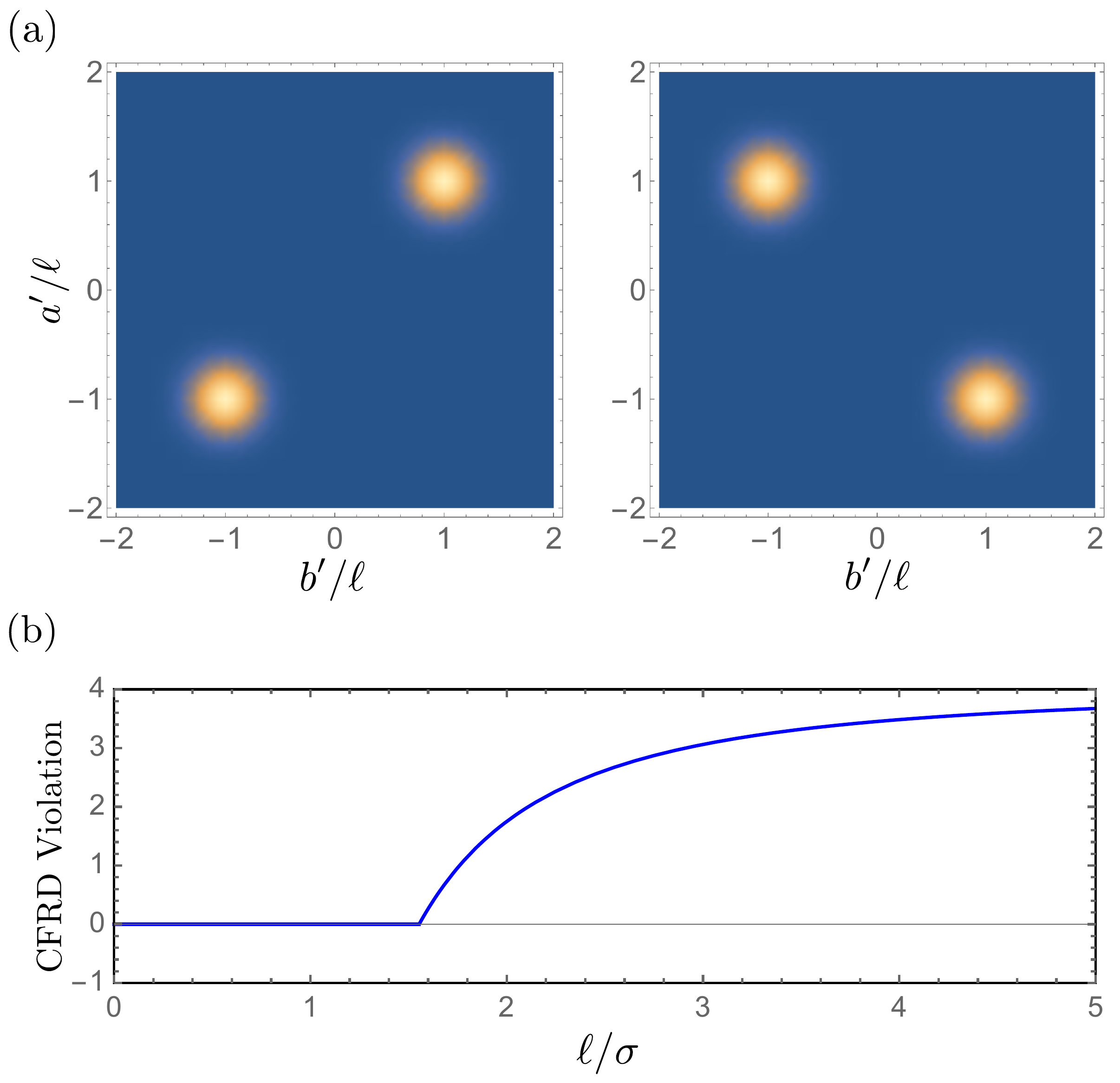}
\end{center}
\caption{(Color online) (a) Density plots of a Gaussian PR box of order 2, with centre vector characterised by $\boldsymbol{a}=(\ell,-\ell)=\boldsymbol{b}$ and width vector $\boldsymbol{\sigma}=(\ell/5,\ell/5)$, for the inputs $(x,y)=(0,0)$, $(0,1)$, or $(1,0)$ (left) and $(x,y)=(1,1)$ (right). Note that, for both plots, the projections onto the horizontal as well as vertical axes coincide, reflecting the fact that the behaviour is no-signalling. Each centre point may also have a different width (or squeezing), but we do not consider that here for simplicity. (b) Violation of the CFRD inequality, normalised by the factor $\ell^4$, by the Gaussian behaviour in question as a function of the parameter $\ell/\sigma$. The CFRD inequality certifies that the Gaussian PR box is supraquantum for the parameter region with $\ell/\sigma \geq\sqrt{1+\sqrt 2}\approx 1.55$. 
}
\label{fig:ViolCVPRd2}
\end{figure}

The first case that we study is a sub-class of behaviours that we term \emph{Gaussian PR boxes}.
To this end, we first introduce two real vectors, $\boldsymbol{a}:=(a_1,\ldots,a_k)$ and $\boldsymbol{b}:=(b_1,\ldots,b_k)$, with different components, i.e., such that $a_1\neq a_2\neq\ldots a_k$ and $b_1\neq b_2\neq\ldots b_k$, and one positive-real vector $\boldsymbol{\sigma}:=(\sigma_1,\ldots,\sigma_k)$, all of length $k\in\mathbb{N}$. The vectors $\boldsymbol{a}$ and $\boldsymbol{b}$ determine $k$ points $(a_j,b_j)$ where Gaussian-measure components are centred; while the vector $\boldsymbol{\sigma}:=(\sigma_1,\ldots,\sigma_k)$ determines their widths.
More precisely, then, we say that $\boldsymbol{\mu}\in\mathcal{M}_\mathrm{NS}$ is a Gaussian PR box of order $k$, with centre vector $(\boldsymbol{a},\boldsymbol{b})$ and width vector $\boldsymbol{\sigma}$, if it is of the form
\begin{align}
\mu_{x,y}^{(k,\boldsymbol{a}, \boldsymbol{b}, \boldsymbol{\sigma})}:= \frac{1}{k} \sum_{j=1}^{k} \mathcal{N}_{(a_j,b_{[j+x\,y]_k}),\sigma_j},
\end{align}
where $[\ ]_k$ denotes modulo $k$ and $\mathcal{N}_{(a, b),\sigma}$ is the normal (Gaussian) measure centred at $(a, b)$ and with width $\sigma$, defined through Eq.~(\ref{eq:DefMeasFromDensity}) with the probability density
\begin{align}
p_{(a,b),\sigma}(a',b')=\frac{1}{2\,\pi\, \sigma^2} e^{-\frac{(a-a')^2+(b-b')^2}{2\,\sigma^2}}.
\end{align}

Whether a Gaussian PR box is supraquantum or not depends on $(\boldsymbol{a},\boldsymbol{b})$ and $\boldsymbol{\sigma}$. As an example, consider next the simple case with $k=2$, $\boldsymbol{a}=(\ell,-\ell)=\boldsymbol{b}$, for some arbitrary $\ell\in\mathbb R_{\neq0}$, and $\boldsymbol{\sigma}=(\sigma,\sigma)$, graphically represented in Fig. \ref{fig:ViolCVPRd2} (a). It is immediate to see that the resulting behaviour violates the CFRD inequality by the amount
\begin{align}
\label{eq:Gaussian_violation}
\max\ \big\{8\,\ell^4-4\,\left(\sigma^2+\ell^2\right)^2,0\big\}.
\end{align}
This violation is plotted in Fig.~\ref{fig:ViolCVPRd2}(b) as a function of $\sigma/\ell$. Note that it grows unboundedly with $\ell$. The condition for this Gaussian PR box to violate the CFRD inequality is $\ell/\sigma\geq \sqrt{1+\sqrt 2}\approx 1.55$, as can be graphically appreciated in the figure. In turn, taking, for the Gaussian PR box above, the limit $\sigma\to0$,  one obtains the behaviour with components
\begin{align}
\mu_{x,y}=\frac{1}{2} \big[&\delta_{\ell,(-1)^{xy}\ell}+ \delta_{-\ell,-(-1)^{xy}\ell}\big],
\label{eq:BinaryCVPRbox}
\end{align}
with $\delta$ the measure defined in Eq. \eqref{eq:DiracMeasure}. This limiting box violates the CFRD inequality by $4\ell^4$. In fact, it is the CV version of the original dichotomic-input dichotomic output PR box \cite{PRboxes}.

Similarly, to define generic CV PR boxes, we take the $\boldsymbol{\sigma}\to\boldsymbol{0}$ limit of the Gaussian PR boxes of Eq. \eqref{eq:PRBoxesMeasures}. That is, we say that $\boldsymbol\mu^{(k,\boldsymbol{a}, \boldsymbol{b})}\in\mathcal{M}_\mathrm{NS}$ is a CV PR box of order $k$ and center vector $(\boldsymbol{a},\boldsymbol{b})$, with different real components such that $a_1\neq a_2\neq\ldots a_k$ and $b_1\neq b_2\neq\ldots b_k$, if it is of the form
\begin{align}
\mu_{x,y}^{(k,\boldsymbol{a}, \boldsymbol{b})}:=\mu_{x,y}^{(k,\boldsymbol{a}, \boldsymbol{b}, \boldsymbol{0})}= \frac{1}{k} \sum_{j=1}^{k} \delta_{a_j,b_{[j+x\,y]_k}}.
\label{eq:PRBoxesMeasures}
\end{align}
One can immediately verify that these behaviours fulfil the no-signalling constraints \eqref{eq:NS_constraints}. These boxes are the CV version of the finite-dimensional PR boxes generalized to arbitrarily many outputs and dichomotic inputs given in Ref. \cite{PRABoxesDdimesnions}. {Still, Eq. \eqref{eq:PRBoxesMeasures}  does not yet describe the most general CV PR box, because input and output relabelling symmetries must be taken into account. For dichotomic inputs, the possible local, reversible relabelings are given by $x\rightarrow [x+1]_2$ and $y\rightarrow [y+1]_2$  \cite{PRABoxesDdimesnions}. The situation is notably different, however, for the  outputs, as they are continuous. For CV outputs, the most general local, reversible relabelings are given by $\boldsymbol a\rightarrow \boldsymbol{\alpha}_x(\boldsymbol a)$ and $\boldsymbol b\rightarrow \boldsymbol{\beta}_y(\boldsymbol b)$, where $\boldsymbol{\alpha}_x:\mathbb R^k\rightarrow \mathbb R^k$ and $\boldsymbol{\beta}_y:\mathbb R^k\rightarrow \mathbb R^k$ are, for every $x,y\in\{0,1\}$, bijective maps from $\mathbb R^k$ to itself. This amounts to reshuffling the components of the center vectors in a reversible, input-dependent fashion, so that the condition $[\boldsymbol{\alpha}_x(\boldsymbol a)]_1\neq [\boldsymbol{\alpha}_x(\boldsymbol a)]_2\neq\ldots [\boldsymbol{\alpha}_x(\boldsymbol a)]_k$ and $[\boldsymbol{\beta}_y(\boldsymbol b)]_1\neq [\boldsymbol{\beta}_y(\boldsymbol b)]_2\neq\ldots [\boldsymbol{\beta}_y(\boldsymbol b)]_k$ is always maintained.

Since the relabelings are local and reversible, all boxes equivalent under them have the same nonlocality properties.
Indeed, all the boxes given by Eq. \eqref{eq:PRBoxesMeasures}, i.e. for all different center vectors, are equivalent under input-independent relabelings. So, any of them, i.e. for any fixed center vector, can be taken as representative to define (modulo local, reversible, and input-dependent relaballings) the entire class of all CV PR boxes. 
This is, in turn, equivalent to allowing for input-dependent center vectors $(\boldsymbol{a}_x,\boldsymbol{b}_y)$ directly in the definition:
\begin{definition}[Set of CV PR boxes]\label{def:SetPRboxes}
We define the class $\mathcal{M}_\mathrm{PR}$ as the set 
\begin{align}
\mathcal{M}_\mathrm{PR}:=\big\{\boldsymbol\mu^{(k,\boldsymbol{a}_{0},\boldsymbol{a}_{1},\boldsymbol{b}_{0},\boldsymbol{b}_{1})}\in\mathcal{M}_\mathrm{NS}\big\}_{k\in\mathbb N,\,\boldsymbol{a}_{0},\boldsymbol{a}_{1},\boldsymbol{b}_{0},\boldsymbol{b}_{1}\in \mathbb R^{k}},
\label{eq:SetCVPRmeasures}
\end{align}
where each behaviour component $\left[\boldsymbol\mu^{(k,\boldsymbol{a}_{0},\boldsymbol{a}_{1},\boldsymbol{b}_{0},\boldsymbol{b}_{1})}\right]_{x,y}$ is given by a measure $\mu_{x,y}^{(k,\boldsymbol{a}_{x},\boldsymbol{b}_{y})}$ as in Eq.~\eqref{eq:PRBoxesMeasures}, with a possibly different vector $(\boldsymbol{a}_{x},\boldsymbol{b}_{y})$ for each $(x,y)\in\{0,1\}^2$.
\end{definition}

Note that for $k=1$, CV PR boxes reduce to local, deterministic behaviours, whose components are given by Dirac delta measures. In contrast, for all $k\geq 2$, Def. \ref{def:SetPRboxes} yields non-local, non-deterministic behaviours. Here, for simplicity, we use the term ``CV PR box"  for all $k\in\mathbb N$ indistinctly, the distinction between  local, deterministic and non-local, non-deterministic ones being given by the order 
$k$. In the next section, we show that every element of $\mathcal{M}_\mathrm{PR}$ is an extreme behaviour of $\mathcal{M}_\mathrm{NS}$ and that the convex hull of  $\mathcal{M}_\mathrm{PR}$ is dense in $\mathcal{M}_\mathrm{NS}$.

\section{Characterization of the set of no-signalling behaviours}
\label{sec:Characterization}
We start by recapping basic definitions of convex combinations and extremality. The convex hull $\mathrm{Conv}(\mathcal{M})$ of an arbitrary (finite or infinite) set $\mathcal{M}$ of behaviours is the set of all finite convex sums of elements of $\mathcal{M}$:
\begin{align}
\mathrm{Conv}(\mathcal{M})=\Big\{\sum_{i=1}^n q_i\, \boldsymbol{\mu}_i :& \boldsymbol{\mu}_i\in\mathcal{M} \Big\}_{q_i\geq 0,\, \sum_{i=1}^n\, q_i=1,\ n\in\mathbb N}.
\label{eq:ConvexHull}
\end{align}
In turn, if $\mathcal{M}$ contains an uncountably infinite number of elements, continuous convex combinations (i.e., convex integrals) of infinitely many elements can be considered too but are not {necessarily} contained in $\mathrm{Conv}(\mathcal M)$.

Clearly, any behaviour that admits a decomposition in terms of a convex integral of uncountably infinitely many behaviours, admits also a decomposition in terms of a convex sum of finitely many behaviour. Similarly, any behaviour that admits a decomposition in terms of a convex sum of an arbitrary finite number of behaviours admits also a decomposition in terms of a convex sum of two behaviours.
This leads us to the same definition of extreme no-signaling behaviours as in discrete variables.
\begin{definition}[Extreme no-signaling behaviours] We call $\boldsymbol{\mu}$ an \emph{extreme point} of $\mathcal{M}_\mathrm{NS}$ if, for {any} $\boldsymbol{\mu}^*,\boldsymbol{\mu}'\in\mathcal{M}_\mathrm{NS}$ and $0\leq q\leq 1$ {such that} 
\begin{align}
\boldsymbol{\mu}=q\, \boldsymbol{\mu}^*+(1-q)\,\boldsymbol{\mu}'
\label{eq:NS_extremal}
\end{align}
{it holds that} either $q=1$ and $\boldsymbol{\mu}^*=\boldsymbol{\mu}$, or  $q=0$ and $\boldsymbol{\mu}'=\boldsymbol{\mu}$.
\end{definition}

Now, we know that every $\boldsymbol{\mu}\in\mathcal{M}_\mathrm{PR}$ has a finite number of outcomes with non-zero probability {and belongs to $\mathcal{M}_\mathrm{NS}$}. That is,  $\boldsymbol{\mu}$ is either an extreme point of $\mathcal{M}_\mathrm{NS}$ or it can be decomposed as the convex sum of at most finitely many points in $\mathcal{M}_\mathrm{NS}$. However, the fact that finite-dimensional PR boxes are no-signalling extreme implies that the former is the case. This follows from the fact that finite-dimensional PR boxes are given by an equivalent expression to that in Eq. \eqref{eq:PRBoxesMeasures} where Kronecker deltas are in the place of the Dirac ones \cite{PRABoxesDdimesnions}. This proves, then, that all CV PR boxes are no-signaling extreme:
\begin{observation}[Extremality of $\mathcal{M}_\mathrm{PR}$] All elements of $\mathcal{M}_\mathrm{PR}$ are extreme points of $\mathcal{M}_\mathrm{NS}$.
\label{observation}
\end{observation}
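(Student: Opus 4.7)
The plan is to reduce extremality of any CV PR box to the known extremality of its finite-dimensional counterpart. The core insight is that every component $\mu_{x,y}$ of $\boldsymbol{\mu}\in\mathcal{M}_\mathrm{PR}$ is supported on a finite set of at most $k$ points, so any purported convex decomposition inside $\mathcal{M}_\mathrm{NS}$ must itself live on this finite support; once this collapse is established, the question reduces to the extremality of the generalised PR box of Ref.~\cite{PRABoxesDdimesnions} in a finite-outcome no-signaling polytope.

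First I would fix $\boldsymbol{\mu}=\boldsymbol\mu^{(k,\boldsymbol{a}_{0},\boldsymbol{a}_{1},\boldsymbol{b}_{0},\boldsymbol{b}_{1})}\in\mathcal{M}_\mathrm{PR}$ and suppose that $\boldsymbol{\mu}=q\,\boldsymbol{\mu}^{*}+(1-q)\,\boldsymbol{\mu}'$ with $\boldsymbol{\mu}^{*},\boldsymbol{\mu}'\in\mathcal{M}_\mathrm{NS}$ and $0<q<1$ (the boundary cases $q=0,1$ are trivial). By Eq.~\eqref{eq:PRBoxesMeasures}, the support of $\mu_{x,y}$ is the finite set $S_{x,y}=\{(a_{x,j},b_{y,[j+xy]_k}):j=1,\ldots,k\}$. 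For any Borel set $E$ disjoint from $S_{x,y}$ one has $0=\mu_{x,y}(E)=q\,\mu^{*}_{x,y}(E)+(1-q)\,\mu'_{x,y}(E)$; since both summands are non-negative and both weights are strictly positive, $\mu^{*}_{x,y}$ and $\mu'_{x,y}$ are each discrete measures supported inside $S_{x,y}$.

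Second, I would pool the coordinates that appear across all four components into finite alphabets $S_\mathrm{A}\subset\mathbb{R}$ and $S_\mathrm{B}\subset\mathbb{R}$, and identify the restrictions of $\boldsymbol\mu^{*}$ and $\boldsymbol\mu'$ to $S_\mathrm{A}\times S_\mathrm{B}$ with finite-outcome conditional probability distributions. Evaluating the no-signaling identities~\eqref{eq:constraint1}--\eqref{eq:constraint2} on singletons immediately yields the standard discrete no-signaling constraints, so both restrictions are finite-dimensional no-signaling behaviours. After relabelling the components of each $\boldsymbol{a}_{x}$ and $\boldsymbol{b}_{y}$ to the symbol set $\{0,1,\ldots,k-1\}$---an input-dependent, reversible bijection of precisely the type discussed after Def.~\ref{def:SetPRboxes}---the behaviour $\boldsymbol{\mu}$ becomes the standard generalised PR box which assigns weight $1/k$ to pairs $(a,b)$ with $b\equiv a+xy\pmod{k}$ and zero elsewhere. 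The no-signaling extremality of this object is proved in Ref.~\cite{PRABoxesDdimesnions}, so the induced finite-dimensional decomposition must be trivial, forcing $\boldsymbol{\mu}^{*}=\boldsymbol{\mu}'=\boldsymbol{\mu}$.

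The main obstacle, as I see it, is the translation of the continuous no-signaling constraints to their discrete counterparts and the verification that the relabellings used to normalise $\boldsymbol\mu$ to the canonical PR-box form lie in the class of local, reversible symmetries described after Def.~\ref{def:SetPRboxes}; this is what guarantees that extremality is preserved by the reduction. Once this bridge is in place, the argument rests entirely on the pre-existing finite-dimensional extremality result.
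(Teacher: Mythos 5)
Your proposal is correct and follows essentially the same route as the paper: both arguments exploit the finite support of the CV PR box components to collapse any purported convex decomposition onto a finite outcome alphabet, and then invoke the no-signaling extremality of the generalised finite-dimensional PR boxes of Ref.~\cite{PRABoxesDdimesnions}. Your write-up is in fact more explicit than the paper's (which states the reduction in two sentences), in particular in justifying via positivity of the weights that the decomposing measures must themselves be supported on the atoms of $\mu_{x,y}$.
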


Observation \ref{observation} constitutes, in turn, a generalisation to the CV realm of the result of Ref. \cite{Pironio2005}, where it is shown that any extreme point of the no-signaling set with a given finite number of inputs and outputs is also extreme in the no-signaling set with any higher (but still finite) number of inputs and outputs. 
In addition, since $\mathcal{M}_\mathrm{PR}$ is not finite, the observation also directly implies that $\mathcal{M}_\mathrm{NS}$ is not a polytope. On the other hand, the fact that $\mathcal{M}_\mathrm{NS}$ contains behaviours with infinitely many outcomes with non-zero probability (e.g., the Gaussian PR boxes of the previous section) automatically implies that $\mathcal{M}_\mathrm{NS}\not\subseteq\mathrm{Conv}(\mathcal{M}_\mathrm{PR})$, in {striking} contrast with the finite-dimensional case. This is due to the fact that every behaviour in $\mathrm{Conv}(\mathcal{M}_\mathrm{PR})$  necessarily has only finitely many outcomes with non-zero probability. 
Nevertheless, we show in App.~\ref{app:proofT1} that $\mathcal{M}_\mathrm{NS}$ is approximated arbitrarily well by $\mathrm{Conv}(\mathcal{M}_\mathrm{PR})$, in the formal sense of there existing, for all $\bm{\mu}\in \mathcal{M}_\mathrm{NS}$, a sequence of elements in $\mathrm{Conv}(\mathcal{M}_\mathrm{PR})$ that converges to $\bm{\mu}$. This proves the following.
\begin{theorem}[$\mathrm{Conv}(\mathcal{M}_\mathrm{PR})$ dense in $\mathcal{M}_\mathrm{NS}$] The closure $\overline{\mathrm{Conv}(\mathcal{M}_\mathrm{PR})}$ of $\mathrm{Conv}(\mathcal{M}_\mathrm{PR})$ equals $\mathcal{M}_\mathrm{NS}$. In other words, $\mathrm{Conv}(\mathcal{M}_\mathrm{PR})$ is a \emph{dense subset} of $\mathcal{M}_\mathrm{NS}$.
\label{theorem1}
\end{theorem}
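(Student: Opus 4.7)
My plan is to establish density by constructing, for each $\boldsymbol{\mu}\in\mathcal{M}_\mathrm{NS}$, an explicit sequence in $\mathrm{Conv}(\mathcal{M}_\mathrm{PR})$ that converges to $\boldsymbol{\mu}$ in the weak topology of probability measures on $\mathbb R\times\mathbb R$ (applied componentwise on the $2\times 2$ matrix of measures). The core idea is a discretize-and-lift scheme: first coarse-grain each $\mu_{x,y}$ to a finitely supported distribution (inheriting no-signaling), then invoke the finite-dimensional polytope characterization of Ref.~\cite{PRABoxesDdimesnions} to decompose this discrete behaviour as a convex sum of BPS-type PR boxes, and finally reinterpret each such finite-output box as a CV PR box of Def.~\ref{def:SetPRboxes} by identifying discrete output labels with the centres of the discretization bins.

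Concretely, for each $n\in\mathbb N$, I would partition $[-n,n]$ into adjoint half-open intervals $I_1^{(n)},\dots,I_{N_n}^{(n)}$ of width at most $1/n$ for Alice, together with a single ``tail'' interval covering $\mathbb R\setminus[-n,n]$; and analogously $J_1^{(n)},\dots,J_{N_n+1}^{(n)}$ for Bob. Define
\begin{equation}
p^{(n)}_{x,y}(i,j):=\mu_{x,y}\bigl(I_i^{(n)}\times J_j^{(n)}\bigr).
\end{equation}
Because $\sum_j p^{(n)}_{x,y}(i,j)=\mu_{x,y}(I_i^{(n)}\times\mathbb R)$ and the analogous identity holds for the other marginal, the no-signaling constraints \eqref{eq:constraint1}--\eqref{eq:constraint2} are preserved, so $\mathbf{p}^{(n)}$ is a finite-output no-signaling behaviour. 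By Ref.~\cite{PRABoxesDdimesnions}, it admits a decomposition $\mathbf{p}^{(n)}=\sum_m q_m^{(n)}\,\mathbf{p}_m^{(n)}$ with each $\mathbf{p}_m^{(n)}$ a (possibly locally relabelled) BPS PR box. Replacing each label $i$ by the centre $c_i^{(n)}$ of $I_i^{(n)}$ (and similarly $j\mapsto d_j^{(n)}$; any fixed distinct far-away point may be used for the tail cell) turns each $\mathbf{p}_m^{(n)}$ into an element $\boldsymbol{\mu}_m^{(n)}\in\mathcal{M}_\mathrm{PR}$ of precisely the form in Eq.~\eqref{eq:PRBoxesMeasures}, allowing for input-dependent centre vectors as in Def.~\ref{def:SetPRboxes}. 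Hence $\tilde{\boldsymbol{\mu}}^{(n)}:=\sum_m q_m^{(n)}\,\boldsymbol{\mu}_m^{(n)}\in\mathrm{Conv}(\mathcal{M}_\mathrm{PR})$.

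It then remains to show that $\tilde{\mu}^{(n)}_{x,y}\to\mu_{x,y}$ weakly for each $(x,y)$. Observe that $\tilde{\mu}^{(n)}_{x,y}$ is exactly the pushforward of $\mu_{x,y}$ under the map $T_n:(a,b)\mapsto(c^{(n)}_{i(a)},d^{(n)}_{j(b)})$ sending each point to the centre of the unique rectangle containing it. For any bounded continuous test function $f$,
\begin{equation}
\Bigl|\int f\,d\tilde{\mu}^{(n)}_{x,y}-\int f\,d\mu_{x,y}\Bigr|=\Bigl|\int(f\circ T_n-f)\,d\mu_{x,y}\Bigr|
\end{equation}
splits into a contribution from $[-n,n]^2$, where $\|T_n(z)-z\|\leq \sqrt{2}/n$ and uniform continuity of $f$ on compacts makes the integrand small, and a tail contribution bounded by $2\|f\|_\infty\,\mu_{x,y}\bigl(\mathbb R^2\setminus[-n,n]^2\bigr)$, which vanishes as $n\to\infty$ by tightness of every Borel probability measure on $\mathbb R^2$. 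Portmanteau's theorem then delivers weak convergence.

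The main technical obstacle I anticipate lies precisely in this last step: one must be careful that the notion of convergence used (weak convergence componentwise) is indeed the ``right'' topology in which the density statement is intended, and that the tail contribution is controlled uniformly enough that the four measures $\mu_{x,y}$ (which a priori could have different supports and different tail decay) can all be approximated simultaneously by a single discretization scheme. Tightness is the crucial ingredient here, and without the uniform choice of bins on each side, no-signaling would not be preserved after discretization, which is why the partition must be chosen independently of $(x,y)$.
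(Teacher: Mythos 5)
Your proposal is correct and follows essentially the same route as the paper's proof: coarse-grain each $\mu_{x,y}$ on a common grid (with the mass outside the window collected into extra Dirac points so that the discretized behaviour remains no-signaling), decompose the resulting finite-outcome no-signaling behaviour into finite-dimensional PR boxes via Ref.~\cite{PRABoxesDdimesnions}, lift these to CV PR boxes of Def.~\ref{def:SetPRboxes}, and verify componentwise weak convergence. The only step to tighten is the convergence estimate: fix the compact window by tightness \emph{before} letting the mesh shrink (the paper decouples these with its two indices $n_1$ and $n$), since the modulus of uniform continuity of $f$ on $[-n,n]^2$ may degrade as $n$ grows if both roles are played by the same parameter.
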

{The theorem is proven in detail in App.~\ref{app:proofT1}. Let us sketch the proof idea  here. We} consider first the case of behaviours defined on a compact domain $[-\mathcal K, \mathcal K]^2$. 
{There}, we can use standard techniques from measure theory to show that for any no-signaling behaviour $\boldsymbol\mu$ one can find a sequence of convex sums of CV PR boxes $\boldsymbol\mu_n $ that converges to it. The main idea is {then} to define the considered sequence in such a way that its components become good approximations of the components of $\boldsymbol\mu$, in the limit of large $n$. This procedure can be seen as a generalization to the approximation of a function by piece-wise constant functions as it is used in integration theory{.  
Next, one generalizes this further} to an infinite sequence of compact intervals which, in the {infinite-length} limit, covers the whole space $\mathbb R\times \mathbb R${.} 

Even though $\mathcal{M}_\mathrm{PR}$ consists exclusively of extreme points of $\mathcal{M}_\mathrm{NS}$, the fact that $\mathrm{Conv}(\mathcal{M}_\mathrm{PR})$ is a {strict} subset of $\mathcal{M}_\mathrm{NS}$ in principle leaves room for other extreme points in $\mathcal{M}_\mathrm{NS}$ that are not contained in $\mathcal{M}_\mathrm{PR}$. {In the following, we approach this problem systematically by focusing first on behaviours with compact support. In this case, a related problem  
was addressed by D. Milman, who proved that, given a compact convex subset $\mathcal C$ of a locally convex space $\mathcal E$ (see \cite{simon2011convexity} for a definition of locally convex) and another set $\mathcal T\subset \mathcal C$ such that $\overline{\mathrm{Conv}(\mathcal{T})}=\mathcal C$, it follows that all extreme points of $\mathcal C$ are in the closure of $\mathcal T$ \cite{simon2011convexity}. The space $\mathcal{M}_{[-\mathcal K, \mathcal K]^2}$ of probability measures with bounded domain $[-\mathcal K, \mathcal K]^2 \subset \mathbb R^2$, is a compact subset of the locally convex space of all measures on the same domain.  The same holds also for the set of behaviours $\mathcal{M}_{[-\mathcal K,\mathcal K]^2}^4$. 
Moreover, the set of no-signaling behaviours on $[-\mathcal K, \mathcal K]^2$ is a closed subset of $\mathcal{M}_{[-\mathcal K, \mathcal K]^2}^4$ and thus also compact, which enables us to use Milman's theorem to characterize its extreme points.
In what follows, we deal with no-signaling and PR box behaviours on a compact domain. To emphasize this, we equip the corresponding no-signaling set and the set of CV PR boxes with a superscript $\mathcal K$, i.e. $\mathcal{M}^{(\mathcal K)}_\mathrm{NS}$ and $\mathcal{M}^{(\mathcal K)}_\mathrm{PR}$. 
Consequently, we arrive at the corollary:
\begin{corollary}[Characterization of $\mathcal{M}^{(\mathcal K)}_\mathrm{NS}$]
Every extreme point of $\mathcal{M}^{(\mathcal K)}_\mathrm{NS}$ belongs to the closure of ${\mathcal{M}^{(\mathcal K)}_\mathrm{PR}}$.
\end{corollary}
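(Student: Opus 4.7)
The plan is to deduce this as a direct application of Milman's theorem (cited and stated in the paragraph preceding the corollary) to the pair $\bigl(\mathcal{M}^{(\mathcal K)}_\mathrm{NS},\,\mathcal{M}^{(\mathcal K)}_\mathrm{PR}\bigr)$. Milman's theorem demands three ingredients: (i) a locally convex ambient space; (ii) compactness and convexity of $\mathcal{M}^{(\mathcal K)}_\mathrm{NS}$ inside it; and (iii) the density condition $\overline{\mathrm{Conv}(\mathcal{M}^{(\mathcal K)}_\mathrm{PR})}=\mathcal{M}^{(\mathcal K)}_\mathrm{NS}$. Conveniently, the text leading up to the corollary has already set up (i) and (ii): the space of four-tuples of finite signed Borel measures on the compact square $[-\mathcal K,\mathcal K]^2$, endowed with the weak-$*$ topology induced by Riesz duality with $C([-\mathcal K,\mathcal K]^2)$, is locally convex; $\mathcal{M}_{[-\mathcal K,\mathcal K]^2}^4$ is compact (by Banach-Alaoglu, since probability measures on a compact Hausdorff space form a weak-$*$ closed subset of the unit ball of $C([-\mathcal K,\mathcal K]^2)^*$); and $\mathcal{M}^{(\mathcal K)}_\mathrm{NS}$ is highlighted as a closed, hence compact, convex subset thereof, closedness following from the fact that positivity, normalization, and the no-signaling marginal equalities~\eqref{eq:constraint1}--\eqref{eq:constraint2} are each preserved under weak-$*$ limits.

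It therefore only remains to verify the density condition (iii). This is essentially a compact-domain restriction of Theorem~\ref{theorem1}, whose proof sketch (first paragraph after the theorem) and full version (App.~\ref{app:proofT1}) proceed by first constructing, for any $\boldsymbol{\mu}\in\mathcal{M}^{(\mathcal K)}_\mathrm{NS}$, a sequence of elements of $\mathrm{Conv}(\mathcal{M}^{(\mathcal K)}_\mathrm{PR})$ converging to $\boldsymbol{\mu}$ on the compact square, and only subsequently extending to all of $\mathbb{R}^2$. Extracting just this first stage yields $\overline{\mathrm{Conv}(\mathcal{M}^{(\mathcal K)}_\mathrm{PR})}=\mathcal{M}^{(\mathcal K)}_\mathrm{NS}$, as required. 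With the three hypotheses at hand, Milman's theorem delivers the corollary in one line: every extreme point of $\mathcal{M}^{(\mathcal K)}_\mathrm{NS}$ is contained in $\overline{\mathcal{M}^{(\mathcal K)}_\mathrm{PR}}$.

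The main obstacle I anticipate is of a technical rather than a conceptual nature: ensuring that the notion of convergence produced by the approximation argument of App.~\ref{app:proofT1} is genuinely the same weak-$*$ convergence underlying the compactness statement. Piecewise-constant approximations, in the sense used in integration theory, naturally converge in total variation when both sides are absolutely continuous with respect to the Lebesgue measure, but this fails as soon as Dirac-supported CV PR boxes are used to approximate absolutely continuous no-signaling behaviours (or vice versa); the common denominator is the weak-$*$ topology, equivalently convergence of integrals $\int f\,d\mu_n\to\int f\,d\mu$ against every $f\in C([-\mathcal K,\mathcal K]^2)$. Once the topology is thus aligned between the compactness and the density statements, the corollary is an immediate consequence of Milman's theorem, exactly as foreshadowed by the discussion preceding it.
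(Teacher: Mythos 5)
Your proposal follows the same route as the paper: the corollary is obtained by applying Milman's theorem to the pair $\bigl(\mathcal{M}^{(\mathcal K)}_\mathrm{NS},\,\mathcal{M}^{(\mathcal K)}_\mathrm{PR}\bigr)$, with compactness and convexity of $\mathcal{M}^{(\mathcal K)}_\mathrm{NS}$ in the weak topology on measures over $[-\mathcal K,\mathcal K]^2$ and the density condition supplied by the compact-domain case of Theorem~\ref{theorem1} (Lemma~\ref{lemma1} in App.~\ref{app:proofT1}). Your additional care about Banach--Alaoglu and about matching the weak-$*$ topology between the compactness and density statements merely makes explicit what the paper asserts, so the argument is correct and essentially identical.
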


Further on, it is interesting to investigate if the closure of $\mathcal M^{(\mathcal K)}_{\mathrm{PR}}$ contains behaviours that are extreme as well. If this was not the case, it would prove that all extreme points of $\mathcal M^{(\mathcal K)}_{\mathrm{NS}}$ are in $\mathcal M^{(\mathcal K)}_{\mathrm{PR}}$. We thus have to answer the question if PR boxes of infinite order, i.e. in the limit $k \rightarrow \infty$ (see Eq.~(\ref{eq:PRBoxesMeasures})), are also extreme. In Appendix~\ref{app:limit_k} we provide evidence suggesting that this is not the case. More precisely, we provide an examplary sequence of PR boxes whose limiting behaviour is not extreme, thus implying that ${\mathcal{M}^{(\mathcal K)}_\mathrm{PR}}$ is not a closed set. 
This evidence leads us to the following conjecture.
\begin{conjecture}[Characterization of $\mathcal{M}^{(\mathcal K)}_\mathrm{NS}$] Every extreme point of $\mathcal{M}^{(\mathcal K)}_\mathrm{NS}$ belongs to 
$\mathcal{M}^{(\mathcal K)}_\mathrm{PR}$.
\label{conjecture}
\end{conjecture}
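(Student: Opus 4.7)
The starting point is the Corollary, which already guarantees that every extreme point of $\mathcal{M}^{(\mathcal K)}_\mathrm{NS}$ lies in the closure $\overline{\mathcal{M}^{(\mathcal K)}_\mathrm{PR}}$. To upgrade the Corollary to the Conjecture it therefore suffices to establish the complementary statement: every element of $\overline{\mathcal{M}^{(\mathcal K)}_\mathrm{PR}}\setminus \mathcal{M}^{(\mathcal K)}_\mathrm{PR}$ fails to be extreme in $\mathcal{M}^{(\mathcal K)}_\mathrm{NS}$. The explicit sequence constructed in Appendix~\ref{app:limit_k} already realises this reasoning for one particular limit; the plan is to promote that isolated example into a generic construction covering every such limit.

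The first step would be to obtain a canonical parametric description of the boxes in $\overline{\mathcal{M}^{(\mathcal K)}_\mathrm{PR}}$. Since $[-\mathcal K,\mathcal K]^{2}$ is compact, every sequence in $\mathcal{M}^{(\mathcal K)}_\mathrm{PR}$ is tight by Prokhorov's theorem and admits a weakly convergent subsequence. Starting from the atomic form of Eq.~\eqref{eq:PRBoxesMeasures} and reindexing the $k_{n}$ atoms through an auxiliary parameter $\lambda\in [0,1]$, I would argue that any such weak limit $\boldsymbol\mu$ admits a representation
\begin{equation}
\mu_{x,y}(E)=\int_{0}^{1}\delta_{\alpha_{x}(\lambda),\,\beta_{y}(\tau_{xy}(\lambda))}(E)\,d\eta(\lambda),
\label{eq:plan_repr}
\end{equation}
where $\eta$ is a Borel probability measure on $[0,1]$, the maps $\alpha_{x},\beta_{y}\colon [0,1]\to[-\mathcal K,\mathcal K]$ are measurable, and $\tau_{xy}\colon [0,1]\to [0,1]$ is the trivial map for $xy=0$ and the limit of the cyclic shift $j\mapsto j+1\pmod{k_{n}}$ for $xy=1$. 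Under this dictionary, elements of $\mathcal{M}^{(\mathcal K)}_\mathrm{PR}$ are those with $\eta$ finitely supported, while elements of $\overline{\mathcal{M}^{(\mathcal K)}_\mathrm{PR}}\setminus \mathcal{M}^{(\mathcal K)}_\mathrm{PR}$ are those for which $\eta$ has infinite support.

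The second and decisive step is the explicit convex splitting. Given any $\boldsymbol\mu$ of the form \eqref{eq:plan_repr} with $\eta$ of infinite support, one identifies a measurable $T\subset [0,1]$ invariant under each $\tau_{xy}$ and such that $q:=\eta(T)\in (0,1)$. Setting $\eta_{1}:=\eta(\cdot\cap T)/q$ and $\eta_{2}:=\eta(\cdot\setminus T)/(1-q)$ and substituting them for $\eta$ in Eq.~\eqref{eq:plan_repr} produces two behaviours $\boldsymbol\mu^{(1)},\boldsymbol\mu^{(2)}$. Each remains no-signaling: under Eq.~\eqref{eq:plan_repr} the conditions \eqref{eq:NS_constraints} reduce to Alice's marginal being $y$-independent (automatic from the integrand) and $\tau_{xy}$ being $\eta_{i}$-preserving (guaranteed by $\tau_{xy}$-invariance of $T$). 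Since $\eta_{1}\neq \eta_{2}$, one obtains the non-trivial decomposition $\boldsymbol\mu=q\,\boldsymbol\mu^{(1)}+(1-q)\,\boldsymbol\mu^{(2)}$, contradicting extremality and proving the conjecture.

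The hard part, in my view, actually involves both steps above. First, rigorously establishing the canonical form \eqref{eq:plan_repr} requires ruling out pathological weak limits in which the finite-$k_{n}$ deterministic correlation between Alice's and Bob's outputs degenerates into a genuinely stochastic conditional kernel, which would break the ansatz; this seems to call for a disintegration of any candidate extreme point with respect to Alice's marginal combined with Eqs.~\eqref{eq:NS_constraints}, used to force the kernel to be a Dirac measure $\eta$-almost everywhere. Second, one must exhibit a non-trivial $\tau_{xy}$-invariant $T$ whenever $\eta$ has infinite support---which fails naively if the limit shift $\tau_{1,1}$ turns out to be ergodic for $\eta$, so one would then need a different decomposition (possibly into behaviours no longer of PR-box form). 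Overcoming these two issues jointly is, to my mind, the core technical obstacle separating the Corollary from the full Conjecture.
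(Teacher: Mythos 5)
First, a point of calibration: the paper does not actually prove this statement. It is stated as a \emph{conjecture}, supported only by (i) the Corollary obtained from Milman's theorem, which places every extreme point of $\mathcal{M}^{(\mathcal K)}_\mathrm{NS}$ in the closure $\overline{\mathcal{M}^{(\mathcal K)}_\mathrm{PR}}$, (ii) the single example of Appendix~\ref{app:limit_k}, exhibiting one sequence of CV PR boxes of growing order whose weak limit (the perfectly correlated behaviour) is non-extreme, and (iii) a heuristic analogy with the fact that the extreme points of the set of probability measures are the Dirac measures. The authors explicitly say a proof would require more involved arguments beyond the scope of the article. Your overall reduction --- combine the Corollary with the complementary claim that every element of $\overline{\mathcal{M}^{(\mathcal K)}_\mathrm{PR}}\setminus\mathcal{M}^{(\mathcal K)}_\mathrm{PR}$ fails to be extreme --- is exactly the right way to frame what is missing, and it is the route the paper itself gestures at. The question is whether your two steps close the gap.

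They do not, and the two obstacles you flag yourself are genuine, not technicalities. First, the canonical representation of an arbitrary weak limit by a single hidden-variable measure $\eta$ on $[0,1]$ with shared response functions $\alpha_x,\beta_y$ and a single coupling map $\tau_{xy}$ is unproven and is essentially the entire structural content of the conjecture: weak convergence does not preserve the atom-by-atom pairing of Eq.~\eqref{eq:PRBoxesMeasures}, so the limit of the $xy=1$ component need not be supported on a graph over the same $\lambda$ that generates the $xy=0$ components. The sequence of Appendix~\ref{app:limit_k} already shows the cyclic shift can wash out completely in the limit, and other sequences can decorrelate the atoms so that the limiting coupling is genuinely stochastic; ruling this out would require precisely the disintegration-plus-no-signaling argument you defer. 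Second, even granting the representation, your convex splitting needs a non-trivial $\tau_{xy}$-invariant set $T$ with $\eta(T)\in(0,1)$, and when the limiting $\tau_{1,1}$ is ergodic for $\eta$ (for instance an irrational rotation arising from shifts by $\lfloor\theta k_n\rfloor$ with $\eta$ Lebesgue) no such set exists. In that case you must either produce a decomposition of an entirely different kind or confront the possibility that such a limit is extreme --- which would \emph{refute} the conjecture rather than prove it. So what you have is a sensible programme whose two open steps coincide with the reason the statement is still a conjecture; it is not a proof.
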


Even though, the preceding discussion was restricted to behaviours with outcomes on a compact set, we have reason{s} to believe that the conjecture holds also in the general case of unbounded support. Namely, in probability theory it is a rather standard result that all extreme points of the set of probability measures are given by Dirac measures (see Eq.~(\ref{eq:DiracMeasure})). In particular, this is the case for probability measures defined on $\mathbb{R}$. Similarly, the extreme no-signaling behaviours may have also only finite support, which would suggest our Conjecture~\ref{conjecture} also in the general case of behaviours defined on $\mathbb{R}$. A proof of Conjecture~\ref{conjecture} would however require more involved arguments which go beyond the scope of the present article.}

Let us finish with some final clarifications on the boundary and the boundedness of $\mathcal{M}_\mathrm{NS}$. In the finite-dimensional case{, the} boundar{y between the no-signaling behaviours and behaviour-like objects that still satisfy the no-signaling constraints but involve non-positive probability distributions} is given by the subset of all convex combinations of no-signaling extreme points resulting in non strictly-positive behaviours (i.e., whose $(x,y)$-th components are probability measures assigning zero probability to some event). Consequently, the set of no-signaling behaviours has a nonempty interior. In contrast, for infinite dimensional behaviours, the boundary of $\mathcal{M}_\mathrm{NS}$ is actually $\mathcal{M}_\mathrm{NS}$ itself showing that its interior is empty. The latter can be proven using convergence  arguments  similar to those used in the proof of Theorem \ref{theorem1}{, i.e.} every no-signaling behaviour {is} arbitrarily close {(in the weak-convergence sense) to a non-}positive no-signaling beahviou{r.} This may at first blush seem bizarre, but it is actually a typical property of  compact convex sets in infinite-dimensional spaces. Indeed, the sets of probability distributions or quantum sates for infinite-dimensional systems display exactly the same property (see, e.g., Ref. \cite{Haapasalo}).

Lastly, we stress that in the present work we did not touch the question of whether the set $\mathcal M_\mathrm{NS}$ is bounded or not. Doing so would require to introduce an appropriate metric and, as we are dealing with infinite dimensional spaces, the boundedness of the set $\mathcal M_{\mathrm{NS}}$ might depend on its particular choice. For instance, with respect to the Lévy-Prokhorov metric, which is a metric on the set of probability measures  associated to the weak topology, the set of all probability measures is bounded. Hence, for this metric also the no-signaling set is bounded, since the components of behaviours are by definition always probability measures.

\section{Final discussion} 
\label{sec:Conclusion}
We have studied  supraquantum 
Bell correlations in a genuinely CV regime, i.e., without discretisation procedures such as binning \cite{Grangier, Tan, Gilchrist, Munro, SignBinning, RootBinning, LeeJaksch}. To the best of our knowledge, this is the first such investigation reported. Here, genuine CV supraquantumness was witnessed by the violation of the CFRD inequality \cite{CFRD}, which, for the bipartite case, is known not to admit any quantum violation \cite{CFRDSalles,CFRDSallesLong}. We found a class of supraquantum Gaussian PR boxes, whose zero-width limit gives the CV PR boxes. Here, we have explicitly checked the supraquantumness of both Gaussian and CV PR boxes of order $k=2$. Interestingly, due to symmetries in the CFRD inequality, no violation can be found for $k=3$, but supraquantumness of CV PR boxes of higher orders is guaranteed by the supraquantumness of the equivalent boxes in finite dimensions. In turn, the supraquantumness of finite-width Gaussian PR boxes of higher order can be verified violating -- via some appropriate binning -- finite-dimensional Bell inequalities above their quantum limit; but this is outside the scope of this paper.\\
\indent
In addition, we have characterised the set of CV no-signaling correlations from a geometrical point of view. To this end, we devised a mathematical framework to deal with arbitrary CV no-signaling behaviours, based on conditional probability measures instead of conditional probability distributions. 
With this, we have shown that, for CV systems, the convex hull (i.e. the set of all finite convex sums) of all CV PR boxes is dense in the no-signaling set, instead of equal to it as in finite dimensional systems. {In particular, this result tells us that every no-signalling behaviour can be approximated {arbitrary well} by a sequence of behaviours with a finite number of non-zero probability outcomes. Consequently, the nonlocality of every CV no-signaling behaviour can always be detected with discrete Bell inequalities in combination with a binning procedure, for sufficiently large number of bins.
\\
\indent 
Since every CV PR box assigns a non-zero probability to a finite number of outcomes, being thus in one-to-one correspondence with a discrete PR box in the usual finite-dimensional scenario, it is not surprising that every CV PR box is extreme {in the no-signaling set. In contrast, the possibility that all extreme points of the no-signaling set are given by CV PR boxes, as suggested by Conjecture~\ref{conjecture}, appears as more surprising.} 
Indeed, it would evidence a qualitative difference between the structure of quantum theory and that of generic probability theories compatible with the no-signaling principle, a question that has been previously considered in other scenarios too~\cite{Kleinmann}. Namely, in quantum theory we know about the existence of behaviours with an uncountably infinite number of non-zero probability outcomes which are extreme in the set of CV quantum correlations. The latter quantum behaviours can be built, e.g., with extreme quantum POVMs with a continuous spectrum~\cite{Holevo,HeinosaariPellonpaa,Pellonpaa} acting on pure CV entangled states. We leave {the proof (or disproof) of this conjecture as an open question for} future investigations.}
\\
\indent
Another interesting question for future investigations is how to formalize the notion of tightness \cite{Masanes} for CV Bell inequalities; and, in particular, whether the CFRD inequality is tight or not. In finite dimensions, non-trivial tight Bell inequalities without a quantum violation exist in the multipartite scenario \cite{Almeida10}, but no equivalent example is known for bipartite systems. If the CFRD inequality were tight, our results would give it the status of the first known example of a non-trivial tight Bell inequality with no quantum violation in the bipartite setting. \\
\indent
To end up with, far from being just a mere abstract exercise, studying supraquantum non-locality helps us understand quantum non-locality itself.
Efficient tools to study non-locality for discrete systems --such as semi-definite or linear programming-- no longer apply for CV systems; so that the characterization of non-local correlations is a much harder task. 
We thus hope that our findings can be useful for future research, such as, e.g., searching for novel CV Bell inequalities or, more generally, studying generalized-probability theories in CV systems.

\begin{acknowledgments}
LA thanks D. Cavalcanti for useful comments and the International Institute of Physics (IIP) Natal, Brazil, where parts of this work were carried out, for the hospitality. LA's work is financed by the Brazilian ministries MEC and MCTIC and Brazilian agencies CNPq, CAPES, {FAPESP,} FAPERJ, and INCT-IQ. ALF and AK acknowledge CAPES-COFECUB project Ph-855/15 and CNPq for financial support. AK acknowledges financial support from the ERC (Consolidator Grant 683107/TempoQ), and the DFG. A special thank goes to T. Cabana, S. Kaakai and C. Ketterer for the clarification of mathematical details, and to J. P. Pellonp\"a\"a for valuable discussions about extremality problems in quantum mechanics.  
\end{acknowledgments}
\appendix

\section{Radon-Nikodym Theorem}\label{app:Radon}
A measurable space is given by a pair $(X,\Sigma)$, where $X$ is some nonempty set and $\Sigma$ denotes a $\sigma$-algebra on $X$. We define a measure $\nu$ on $X$ to be $\sigma$-finte if $X$ is a countable union of measurable sets $X_i$ with finite measure $\nu(X_i)< \infty$. Note that every probability measure $\mu$ on $\mathbb R$ is also $\sigma$-finite since, on the one hand, $\mathbb R$ can be expressed as countable union of measurable set and, on the other hand, we have by definition that $\mu(A)<1$, for all $A\subset \mathbb R$. Furthermore, a measure $\nu$ is called absolutely continuous with respect to $\mu$, if from $\nu(A)=0$ it follows $\mu(A)=0$, for every measurable set $A\subset X$. Now, we are in the position to state the Radon-Nikodym Theorem.
\begin{theorem}[Radon-Nikodym] 
Given a $\sigma$-finite measure  $\nu$ on $(X,\Sigma)$ that is absolutely continuous with respect to a $\sigma$-finite measure $\mu$ on $(X,\Sigma)$, then it exists a measurable function $f: X \rightarrow [0,\infty) $, referred to as the Radon-Nikodym derivative, such that
\begin{align}
\nu(A) = \int_A f \, d\mu,
\end{align}
for any measurable subset $A \subset X $.
\label{theoremRadon}
\end{theorem}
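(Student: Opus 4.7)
The plan is to use von Neumann's elegant Hilbert-space proof, which reduces existence of the Radon-Nikodym derivative to the Riesz representation theorem on $L^2$. First I would reduce to the finite-measure case: by $\sigma$-finiteness, write $X=\bigsqcup_{i\geq 1} X_i$ as a disjoint countable union with $\mu(X_i),\nu(X_i)<\infty$, prove the theorem on each $X_i$ separately, and then patch the resulting densities together into a single measurable $f:X\to[0,\infty)$ by setting $f=\sum_i f_i\,\mathbf{1}_{X_i}$. Measurability and the required integral identity follow from countable additivity of $\nu$ and monotone convergence.

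For the finite-measure case, introduce the auxiliary finite measure $\lambda=\mu+\nu$ and consider the linear functional $T:L^2(X,\lambda)\to\mathbb{R}$ given by $T(\varphi)=\int_X\varphi\,d\nu$. By Cauchy-Schwarz one has $|T(\varphi)|\leq\sqrt{\nu(X)}\,\|\varphi\|_{L^2(\lambda)}$, so $T$ is bounded. The Riesz representation theorem then yields $g\in L^2(X,\lambda)$ with $\int\varphi\,d\nu=\int\varphi\,g\,d\lambda$ for every $\varphi\in L^2(\lambda)$. Testing this identity against indicator functions $\mathbf{1}_A$ gives the pointwise-averaged relation $\int_A(1-g)\,d\nu=\int_A g\,d\mu$. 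Choosing $A=\{g<0\}$ forces $\lambda(A)=0$, and choosing $A=\{g\geq 1\}$ together with absolute continuity $\nu\ll\mu$ forces $\mu(A)=\nu(A)=0$. Hence $0\leq g<1$ holds $\lambda$-almost everywhere, and one can define $f=g/(1-g)$, a measurable $[0,\infty)$-valued function. Substituting the truncated test functions $\varphi_N=\mathbf{1}_A\min\{(1-g)^{-1},N\}$ and passing to the limit $N\to\infty$ via monotone convergence yields the desired identity $\nu(A)=\int_A f\,d\mu$.

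The main obstacle lies in establishing the upper bound $g<1$ almost everywhere: this is precisely where the absolute-continuity hypothesis $\nu\ll\mu$ is indispensable. Without it, the same argument only delivers the weaker Lebesgue decomposition $\nu=\nu_{\mathrm{ac}}+\nu_{\mathrm{sing}}$, with the constructed density capturing only the absolutely-continuous part. A secondary subtlety is uniqueness of $f$ modulo $\mu$-null sets, which one would dispatch by observing that if two candidates $f_1,f_2$ both satisfied the conclusion, then $\int_A(f_1-f_2)\,d\mu=0$ for all measurable $A$ forces $f_1=f_2$ $\mu$-a.e.; this step is routine and can be relegated to a brief remark following the main argument.
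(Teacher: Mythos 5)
Your argument is von Neumann's classical $L^2$/Riesz proof of the Radon--Nikodym theorem, and it is correct: the reduction to finite measures via a common refinement of the $\sigma$-finite decompositions of $\mu$ and $\nu$, the bound $|T(\varphi)|\leq\sqrt{\nu(X)}\,\|\varphi\|_{L^2(\lambda)}$, the exclusion of the sets $\{g<0\}$ and $\{g\geq 1\}$, and the monotone-convergence passage from the truncations $\varphi_N$ to the identity $\nu(A)=\int_A f\,d\mu$ all go through. Be aware, though, that the paper does not prove this statement at all: it is quoted in Appendix~\ref{app:Radon} purely as standard background, so there is no in-paper argument to compare yours against; your route is one of the two textbook proofs (the other being the exhaustion/Hahn-decomposition argument of the original Radon--Nikodym approach, which avoids Hilbert-space machinery at the cost of a more delicate supremum construction). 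Two small points are worth flagging. First, your step at $A=\{g\geq 1\}$ uses absolute continuity in the standard direction, namely $\mu(A)=0\Rightarrow\nu(A)=0$, which is indeed the hypothesis the theorem needs; the paper's Appendix~\ref{app:Radon} states the implication the other way around ($\nu(A)=0\Rightarrow\mu(A)=0$), under which the conclusion would actually fail (take $\mu=\delta_0$ and $\nu$ equal to Lebesgue measure plus $\delta_0$ on $[0,1]$). Second, after establishing $0\leq g<1$ only $\lambda$-almost everywhere, you should fix $f:=0$ on the exceptional null set so that $f$ is genuinely $[0,\infty)$-valued on all of $X$, as the statement requires.
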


\section{Proof of Theorem~\ref{theorem1}}\label{app:proofT1}

Before turning to the proof of Theorem~\ref{theorem1} we provide some preliminary notions of the type of convergence that we will use in the following, i.e. the weak convergence{. We s}ay that a sequence of measures $(\mu_n)_{n\in\mathbb N}\in \mathcal{M}_{\Omega}$ converges weakly towards same $\mu \in \mathcal{M}_{\Omega}$, with $n\rightarrow\infty$, if:
\begin{align}
\int_{\Omega}f {d}{\mu}_n \rightarrow \int_{\Omega} f {d}{\mu},
\end{align}
for all $f\in C_b(\Omega)$, where $C_b(\Omega)$ denotes the set of bounded and continuous functions $f:\Omega\rightarrow \mathbb R$. In what follows, if not stated differently, we will always implicitly assume the use of weak convergence for sequences of measures. Moreover, since we often consider behaviours (\textit{i.e.} matrices with entries given by probability measures), we say that a sequence of behaviours $\bm{\mu}_n$ weakly converges to $\bm{\mu}$ if $[\bm{\mu}_n]_{x, y} \rightarrow [\bm{\mu}]_{x, y}, \forall x, y${.

W}eak convergence is a natural choice in the present context because it is directly applicable to sequences of measures without resorting to a specific distributions in terms of some random variables. Other, possibly stronger, notions of convergence do exist but are not required here. Furthermore, as we will see shortly, weak convergence is also meaningful with respect to physical considerations since{,} from experiments{,} one usually extracts some statistical moments of a probability measure instead of the measure itself.

Further on, as stated also in the main text, in order to prove that $\mathrm{Conv}(\mathcal{M}_\mathrm{PR})$ is dense in $\mathcal{M}_\mathrm{NS}$ we need to show that for every behaviour $\boldsymbol\mu \in \mathcal M_\text{NS}$ one can find a sequence in $\mathrm{Conv}(\mathcal M_\text{PR})$ that converges weakly to $\boldsymbol\mu$. In order to keep the proof  of  Theorem~\ref{theorem1} as instructive as possible, we will first 
provide a proof for  the case of behaviours with compact support meaning that their components are probability measures on $\Omega= [-\mathcal K, \mathcal K]^2$.  A generalization of the proof to the most general case $\Omega=\mathbb R\times\mathbb R$ will then be ensued afterwards. 

{Then, the following Lemma holds.}
\begin{lemma}[$\mathrm{Conv}(\mathcal{M}_\mathrm{PR}^{(\mathcal K)})$ dense in $\mathcal{M}_\mathrm{NS}^{(\mathcal K)}$] The closure $\overline{\mathrm{Conv}(\mathcal{M}_\mathrm{PR}^{(\mathcal K)})}$ of $\mathrm{Conv}(\mathcal{M}_\mathrm{PR}^{(\mathcal K)})$ equals $\mathcal{M}_\mathrm{NS}^{(\mathcal K)}$. In other words, $\mathrm{Conv}(\mathcal{M}_\mathrm{PR}^{(\mathcal K)})$ is a \emph{dense subset} of $\mathcal{M}_\mathrm{NS}^{(\mathcal K)}$.
\label{lemma1}
\end{lemma}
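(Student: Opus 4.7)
My plan is to prove Lemma~\ref{lemma1} by constructing, for every $\boldsymbol\mu\in\mathcal{M}_\mathrm{NS}^{(\mathcal K)}$, an explicit sequence $(\boldsymbol\mu_n)_{n\in\mathbb N}$ in $\mathrm{Conv}(\mathcal{M}_\mathrm{PR}^{(\mathcal K)})$ that converges weakly to $\boldsymbol\mu$. The idea is to discretise the compact square $[-\mathcal K,\mathcal K]^2$ on an increasingly fine grid and then invoke the polytopal structure of the finite-dimensional no-signalling set with dichotomic inputs and finitely many outputs, whose extreme points---the discrete PR boxes---are classified in Ref.~\cite{PRABoxesDdimesnions}.

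Concretely, for each $n\in\mathbb N$ I would partition $[-\mathcal K,\mathcal K]$ into $n$ disjoint half-open intervals $I_1^{(n)},\ldots,I_n^{(n)}$ of length $2\mathcal K/n$, select representative points $a_i^{(n)}\in I_i^{(n)}$ (e.g.\ midpoints), and define the discretised behaviour by
\begin{align*}
[\boldsymbol\mu_n]_{x,y}:=\sum_{i,j=1}^{n}p_n(i,j|x,y)\,\delta_{a_i^{(n)},a_j^{(n)}},
\end{align*}
with $p_n(i,j|x,y):=\mu_{x,y}(I_i^{(n)}\times I_j^{(n)})$. Because $\boldsymbol\mu$ satisfies \eqref{eq:NS_constraints}, the coarse-grained marginals $\sum_j p_n(i,j|x,y)$ and $\sum_i p_n(i,j|x,y)$ are independent of $y$ and $x$ respectively, so $p_n$ is a valid finite-dimensional no-signalling distribution with dichotomic inputs and $n$ outputs per party. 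By Ref.~\cite{PRABoxesDdimesnions}, $p_n$ admits a finite convex decomposition into discrete PR-box extreme points; each such extreme point, lifted via the correspondence $i\mapsto a_i^{(n)}$, becomes an element of $\mathcal{M}_\mathrm{PR}^{(\mathcal K)}$ in the sense of Definition~\ref{def:SetPRboxes}, with local deterministic corners absorbed in the $k=1$ case. Hence $\boldsymbol\mu_n\in\mathrm{Conv}(\mathcal{M}_\mathrm{PR}^{(\mathcal K)})$.

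Weak convergence $\boldsymbol\mu_n\to\boldsymbol\mu$ then reduces to a Riemann-sum argument applied entrywise. For any $x,y\in\{0,1\}$ and $f\in C_b([-\mathcal K,\mathcal K]^2)$, uniform continuity of $f$ on the compact square supplies a modulus $\omega_f$ with $\omega_f(\delta)\to 0$ as $\delta\to 0$, and
\begin{align*}
\left|\int f\,d[\boldsymbol\mu_n]_{x,y}-\int f\,d\mu_{x,y}\right|&\leq \sum_{i,j}\int_{I_i^{(n)}\times I_j^{(n)}}\bigl|f(a_i^{(n)},a_j^{(n)})-f(a,b)\bigr|\,d\mu_{x,y}(a,b)\\
&\leq \omega_f\bigl(2\sqrt{2}\,\mathcal K/n\bigr)\xrightarrow{n\to\infty}0,
\end{align*}
which yields the claim for each of the four entries of the behaviour.

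The step I expect to demand the most care is the clean identification of the finite-dimensional decomposition with elements of $\mathcal{M}_\mathrm{PR}^{(\mathcal K)}$: one has to verify that every discrete PR-box extreme point from Ref.~\cite{PRABoxesDdimesnions}, once expressed in terms of centre vectors $(\boldsymbol a_x,\boldsymbol b_y)$, matches Definition~\ref{def:SetPRboxes} exactly, taking into account the input-dependent reversible relabelings permitted there, and that the local deterministic corners are recovered as the $k=1$ case. The remaining ingredients---the propagation of no-signalling to the coarse-grained marginals and the uniform-continuity estimate---are routine.
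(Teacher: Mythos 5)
Your construction is the same one the paper uses: discretise $[-\mathcal K,\mathcal K]^2$ on a grid of $n^2$ cells, concentrate each cell's mass on a representative point, inherit no-signalling of the coarse-grained distribution from that of $\boldsymbol\mu$, invoke the finite-dimensional PR-box decomposition of Ref.~\cite{PRABoxesDdimesnions}, and conclude weak convergence by a Riemann-sum estimate (the paper bounds via cell-wise min/max of $f$ rather than a modulus of continuity, which is equivalent). The proposal is correct and follows essentially the same route as the paper's proof.
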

Note that, according to the introduced nomenclature in the main text we equiped the corresponding no-signaling set and the set of CV PR boxes with a superscript $\mathcal K$, i.e. $\mathcal{M}^{(\mathcal K)}_\mathrm{NS}$ and $\mathcal{M}^{(\mathcal K)}_\mathrm{PR}$. 
\begin{proof}[Proof of Lemma~\ref{lemma1}]
Without loss of generality we can restrict the following proof to the case $\mathcal K = 1$, i.e. $\Omega = [-1, 1]^2$. The strategy consists {of} explicitly constructing, for every arbitrary $\bm{\mu} \in \mathcal{M}_\mathrm{NS}^{(1)}$, a sequence of behaviours $\bm{\mu}_n\in \mathrm{Conv}(\mathcal{M}_\mathrm{PR}^{(1)})$ that weakly converges to $\bm{\mu}$. 
The proof is divided in three steps: First{, f}or every $\boldsymbol\mu \in \mathcal M_{\mathrm{NS}}${, we} define a sequence of behaviours that weakly converges to $\boldsymbol\mu$. Second, we show that each element of this sequence is indeed a no-signaling behaviour{. T}hird, {we show} that {all such elements can} be expressed as a convex sum of CV PR boxes. 

{For the first step, we} divide the interval $[-1, 1]$ in $n\ge 1$ segments of the same length, denoting each one by $I_n$ ({n}ote that a generalization of the following proof to arbitrary $\mathcal K$'s would simply involve a rescaling of the defined intervals $I_n$.). Next, we define $\bm{\mu}_n$ as follows:
\begin{align}
[\bm{\mu}_n]_{x, y} = \sum_{k, l=1}^n [\bm{\mu}]_{x, y} (I_k \times I_l) \, \delta_{a_k, b_l}, \forall x, y,
\label{eq:def_interval}
\end{align}
where $(a_k, b_l)$ is a point located in the interval $I_k \times I_l$ and the Dirac measure is defined according to Eq.~(\ref{eq:DiracMeasure}). The behaviours $\bm{\mu}_n$ have the same weight as $\bm{\mathrm{\mu}}$ on each of the squares $I_k \times I_l$, but concentrated on a single point $(a_k, b_l)$. In this way, $\bm{\mu}_n$ becomes better and better approximations of $\bm{\mu}$, with increasing $n$. 

To prove that $\bm{\mu}_n$ is indeed weakly converging to $\bm{\mu}$, it suffices to prove that each of its component is weakly converging to the components of $\bm{\mu}$. Let $f$ be a bounded and continuous function defined on the domain $[-1, 1]\times[-1, 1]$. Integrating $f$ with respect to $\mu^{x, y}_n$, yields:
\begin{align}
\int_{[-1, 1]^2} f\mathrm{d} [\bm{\mu}_n]_{x, y} = \sum_{k, l} f(a_k, b_l) [\bm{\mu}]_{x, y} (I_k \times I_l).
\label{eq:Integralmun}
\end{align}
The sum on the right-hand side of Eq.~(\ref{eq:Integralmun}) can be bounded from below and above in the following way:
\begin{align}
\sum_{k, l} [\bm{\mu}]_{x, y} (I_k \times I_l) \underset{I_k \times I_l}{\mathrm{min}} f & \le \int_{[-1,1]\times[-1,1]} f {d} [\bm{\mu}_n]_{x, y} \\
&\le \sum_{k, l} [\bm{\mu}]_{x, y} (I_k \times I_l) \underset{I_k \times I_l}{\mathrm{max}} f,
\end{align}
where $\underset{I_k \times I_l}{\mathrm{min}}$ ($\underset{I_k \times I_l}{\mathrm{max}}$) denotes the minimum (maximum) of the function $f$ over the cell $I_k \times I_l$. The same inequality holds if we integrate $f$ with respect to $[\bm{\mu}]_{x, y}$, and, since $f$ is continuous, this proves that $\int f\mathrm{d} [\bm{\mu}_n]_{x, y} \rightarrow \int f\mathrm{d} [\bm{\mu}]_{x, y}$ and that $[\mu_n]_{x, y} \rightarrow \mu, \forall x, y$. It follows that $\bm{\mu}_n \rightarrow \bm{\mu}$.

{As for the second step, we now } prove that $\bm{\mu}_n$ is no-signaling for all $n$. For a given $n > 0$ and $x, y \in \{0, 1\}^n$, the marginal of $[\mu_n]_{x, y}$ on Bob's side is given by:
\begin{align}
[\boldsymbol{\mu}_n]_{x,y}([-1,1]\times B)  &= \sum_{k, l=1}^n [\bm{\mu}]_{x, y} (I_k \times I_l)  \delta_{a_k, b_l}([-1,1]\times B) \nonumber \\
& = \sum_{l} \delta_{b_l}(B) \sum_{k} [\bm{\mu}]_{x, y} (I_k \times I_l) \nonumber\\
& = \sum_{l} \delta_{b_l}(B) [\bm{\mu}]_{x, y} ([-1,1]\times I_l),
\end{align}
where $\delta_{b_l}$ is the Dirac measure located at $b_l$ in the $l$th interval. Since we know that $\bm{\mathrm{\mu}}$ is a no-signalling behaviour it follows that $[\bm{\mu}]_{x, y} ([-1,1]\times I_l)$ does not depend on $x$ (compare with Eq.~(\ref{eq:constraint1})). The same argument holds for the Alice's marginal and proves that the $\bm{\mu}_n$'s are no-signalling behaviours.

{The third and last step to complete the proof is} to show that $\bm{\mu}_n$ can be written as a convex sum of finitely many CV PR boxes. For this we note that the $\bm{\mu}_n$'s are no-signalling behaviours with a finite number of outcomes (the centers of the intervals $I_{k, l}$) and support $[-1,1]^2$. However, we know form the finite-dimensional case that all behaviours with only finitely many outcomes with non-zero probability can be expressed as a convex combination of finitely many PR boxes. Taking instead their continuous-variable generalizations (\ref{eq:PRBoxesMeasures}), yields the desired decomposition.
\end{proof}

{With Lemma~\ref{lemma1}, we next prove T}heorem~\ref{theorem1}.

\begin{proof}[Proof of Theorem~\ref{theorem1}]
Now, we consider the case $\Omega=\mathbb R\times \mathbb R$. Again, we consider a  $\bm{\mu} \in \mathcal M_\mathrm{PR}$ and want to prove that there exists a sequence $\bm{\mu}_n \in \mathrm{Conv}(\mathcal{M}_\mathrm{PR})$ for which each component converges weakly to the components of $\bm{\mu}$. To do so, we divide $[-n, n]$, with $n \ge 1$, in $2 n^2$ subintervals of length $1/n$ and denote them by $I_n$ as before. Furthermore, we define the components of $\bm{\mu}_n$ as follows:
\begin{align}
[\boldsymbol{\mu}_n]_{x,y} = & \sum_{k, l=1}^{2n^2} [\bm{\mu}]_{x,y} (I_k \times I_l)\delta_{a_k, b_l} + [\boldsymbol\nu_n]_{x,y},
\label{eq:SequenceUnbounded}
\end{align}
where $(a_k, b_l)$ is a point located in the square $I_k \times I_l$, $\delta_{a_k, b_l}$ is the Dirac measure. The first term of Eq.~(\ref{eq:SequenceUnbounded}) corresponds to the same construction as in the compact case treated in Lemma~\ref{lemma1}, whereas the second term $\boldsymbol\nu_n$ is merely necessary to ensure the no-signaling conditions~(\ref{eq:constraint1}) and (\ref{eq:constraint2}) on $\mathbb R\times\mathbb R$. It reads as follows:
\begin{widetext}
\begin{align}
[\boldsymbol{\nu}_n]_{x,y} = &  \sum_{l = -n}^n \Big( [\bm{\mu}]_{x, y} \big(]n, \infty[\, \times\, I_l\big) \delta_{n+1, b_l} + [\bm{\mu}]_{x, y} \big(]-\infty, -n[ \,\times\, I_l\big) \delta_{-(n+1), b_l} \Big) \nonumber \\
& + \sum_{k = -n}^n \Big([\bm{\mu}]_{x, y} \big(I_k\, \times \,]n, \infty[\big) \delta_{a_k, (n+1)} + [\bm{\mu}]_{x, y} \big(I_k\, \times \,]-\infty, -n[\big) \delta_{a_k, -(n+1)} \Big) \nonumber \\
& + [\bm{\mu}]_{x, y} \big(]n, \infty[\, \times\, ]n, \infty[\big) \delta_{n+1, n+1} + [\bm{\mu}]_{x, y} \big(]-\infty, -n[\, \times\, ]n, \infty[\big) \delta_{-(n+1), n+1} \nonumber \\
& + [\bm{\mu}]_{x, y} \big(]n, \infty[ \times ]-\infty, -n[\big) \delta_{n+1, -(n+1)} + [\bm{\mu}]_{x, y} \big(]-\infty, -n[\, \times\, ]-\infty, -n[\big) \delta_{-(n+1), -(n+1)},
\label{eq:SequenceUnboundedExtra}
\end{align}
\end{widetext}
{where $]a,b[$ refers to an open interval bounded by $a$ and $b$, respectively.}
 Note that, in contrast to the compact case treated in Lemma~\ref{lemma1},  the measures ${[\boldsymbol\mu_n]_{x,y}}$ are defined on different intervals for different $n$'s. We will now complete the proof of Theorem~\ref{theorem1} by showing the weak convergence of this sequence in the general case. The other parts of the proof remain unchanged. 

Let $f \in \mathcal{C}_b(\mathbb{R^2})$ and $\epsilon \in [0, 1]$, we want to prove that there exists an $n_0\in\mathbb N$ such that $|\int_{\mathbb{R}^2}f {d} \bm{\mu}_n - \int_{\mathbb{R}^2}f {d} \boldsymbol\mu| < \epsilon$ for all $n > n_0$, where this inequality should be understood as component wise inequality. Since $\bm{\mu}$ is a set of probability measures and $f$ is a bounded function, there exists an $n_1\in \mathbb N$ such that:
\begin{align}
[\bm{\mu}]_{x, y} (\mathbb{R}^2 \setminus [-n_1, n_1]) < \mathrm{min} \left(\epsilon, \frac{\epsilon}{\mathrm{max}_{\mathbb{R}^2} |f|} \right),
\end{align}
for all $(x, y)$ and all $n > n_1$. It follows that:
\begin{align}
\Big| & \int_{\mathbb{R}^2}f {d}[\boldsymbol{\mu}_n]_{x,y} - \int_{\mathbb{R}^2}f {d} [\bm{\mu}]_{x, y} \Big| \nonumber \\
&< \Big|\int_{\mathbb{R}^2\setminus [-n_1, n_1]^2}f {d}[\boldsymbol{\mu}_n]_{x,y} - \int_{\mathbb{R}^2\setminus [-n_1, n_1]^2}f {d} [\bm{\mu}]_{x, y} \Big| \nonumber \\
&+ \Big|\int_{[-n_1, n_1]^2}f {d}[\boldsymbol{\mu}_n]_{x,y} - \int_{[-n_1, n_1]^2}f {d} [\bm{\mu}]_{x, y} \Big|.
\label{eq:proofInequality1}
\end{align}
While the first term on the right and side of inequality (\ref{eq:proofInequality1}) becomes:
\begin{align}
\Big|&\int_{\mathbb{R}^2\setminus [-n_1, n_1]^2}f {d}[\boldsymbol{\mu}_n]_{x,y} - \int_{\mathbb{R}^2\setminus [-n_1, n_1]^2}f {d}[\bm{\mu}]_{x, y} \Big| \nonumber \\ 
& \le \Big|\int_{\mathbb{R}^2\setminus [-n_1, n_1]^2}f {d}[\boldsymbol{\mu}_n]_{x,y}\Big| + \Big|\int_{\mathbb{R}^2\setminus [-n_1, n_1]^2}f {d} [\bm{\mu}]_{x, y} \Big| \nonumber \\
& \le \mathrm{max}_{\mathbb{R}^2} |f| [\boldsymbol{\mu}_n]_{x,y}(\mathbb{R}^2 \setminus [-n_1, n_1]^2) + \epsilon \nonumber \\
& = \mathrm{max}_{\mathbb{R}^2} |f| [\bm{\mu}]_{x, y} (\mathbb{R}^2 \setminus [-n_1, n_1]^2) + \epsilon \nonumber\\
& \le 2 \epsilon,
\label{eq:proofInequality2}
\end{align}
the second term contains an integration over a compact area{,} which allows us to use the statement of Lemma~\ref{lemma1}. Hence, we can conclude that this term is smaller than $\epsilon$ for sufficiently large $n$. Note that Lemma~\ref{lemma1} does not  apply directly here since the considered sequence of behaviours is not no-signaling on the compact domain $[-n_1, n_1]^2$, but rather on $\mathbb R^2$. However, dropping the no-signaling condition does not contradict with the convergence of this sequence.  By combining inequalities (\ref{eq:proofInequality1}) and (\ref{eq:proofInequality2}) we finally arrive at
\begin{align}
\Big|\int_{\mathbb{R}^2}f {d} [\bm{\mu}_n]_{x, y} - \int_{\mathbb{R}^2}f {d} [\bm{\mu}]_{x, y}\Big| < 3 \epsilon,
\end{align}
for $n$ sufficiently large. This quantity goes to zero as $\epsilon$ goes to zero and thus $\bm{\mu}_n$ weakly converges to $\bm{\mu}$.
\end{proof}

\section{Concerning Conjecture~\ref{conjecture}}\label{app:limit_k}
Here we construct a specific example of a sequence of CV PR boxes, with increasing order $k$, whose limit is not an extreme no-signaling behavior anymore. This suggests that one cannot obtain extreme no-signaling behaviors as limits of a sequences of CV PR boxes when the order $k$ goes to infinity. We will restrict ourselves to measures on $[0, 1]^2$ but it can be straightforwardly extended to $\mathbb{R}^2$.
\begin{proof}
We prove that there is a sequence $\bm{\mu}_n \in \mathcal{M}^{(1)}_{\mathrm{PR}}$ that converges to an element $\bm{\mu}$ that is outside of $\mathcal{M}^{(1)}_{\mathrm{PR}}$. Let $\bm{\mu}$ be the set of measures where the two outcomes are always perfectly correlated for all settings: $\mu^{x, y}(a, b) = \delta(a - b)$. $\bm{\mu}$ is clearly no-signaling, but not extreme.

We define $\bm{\mu}_n$ as follows:
\begin{align}
\mu_n^{x, y} =\begin{cases}
\frac{1}{n} \sum_{k = 0}^n \delta_{\frac{k}{n}, \frac{k}{n}}, &\mathrm{for}~x\cdot y = 0, \\
 \frac{1}{n} \left[\sum_{k = 0}^{n - 1} \delta_{\frac{k}{n}, \frac{k+1}{n}} + \delta_{1, 0} \right],  &\mathrm{for}~x\cdot y = 1,
\end{cases}
\end{align}
which yields
\begin{align}
&\iint_{[0, 1]^2} f(a, b) \mu^{x, y}_n(a, b)\\ 
&=\begin{cases} \frac{1}{n} \left[\sum_{k = 0}^{n} f\left(\frac{k}{n}, \frac{k}{n}\right) \right],  &\mathrm{for}~x\cdot y = 0, \\
 \frac{1}{n} \left[\sum_{k = 0}^{n - 1} f\left(\frac{k}{n}, \frac{k+1}{n}\right) + f(1, 0) \right],  &\mathrm{for}~x\cdot y = 1,
\end{cases}
\end{align}
where $f\in C_b([0, 1]^2)$. Now, by a applying standard integration theory it follows that $\left[\frac{1}{n}\sum_{k = 0}^{n - 1} f\left(\frac{k}{n}, \frac{k+1}{n}\right) + f(1, 0) \right] \rightarrow \int_{[0, 1]} f(a, a) = \iint_{[0, 1]^2} f(a, b) \mu^{x, y}(a, b)$. We thus proved that $\bm{\mu}_n$ converges to an element that is outside of $\mathcal{M}^{(1)}_{\mathrm{PR}}$ (since $\bm{\mu}$ has an infinite number of outcomes contrary to all elements of $\mathcal{M}^{(1)}_{\mathrm{PR}}$).
\end{proof}



\begin{thebibliography}{99}
\bibliographystyle{unsrt}

\bibitem{BELL}  J.S. Bell, \textit{On the Einstein-Podolsky-Rosen paradox}, Physics {\bf 1}, 195 (1964).

\bibitem{CHSH}  J. F. Clauser, M. A. Horne, A. Shimony, R. A. Holt, \textit{Proposed Experiment to Test Local Hidden-Variable Theories} Phys. Rev. Lett. {\bf 23} (15): 880 (1969).

\bibitem{PRboxes} S. Popescu, D. Rohrlich, \textit{Quantum nonlocality as an axiom}, Foundations of Physics \textbf{24}, 379 (1994).

\bibitem{Cirelson} B.  S. Cirel\'son, \textit{Quantum generalizations of Bell's inequality}, Lett. Math. Phys. \textbf{4}, 93 (1980).

\bibitem{EPR2}
A. Elitzur, S. Popescu, and D. Rohrlich, \textit{Quantum nonlocality for each pair in an ensemble}, Phys. Lett. A {\bf 162}, 25 (1992).

\bibitem{BKP06}
J. Barrett, A. Kent, and S. Pironio, \textit{Maximally Nonlocal and Monogamous Quantum Correlations}, Phys. Rev. Lett. {\bf 97}, 170409 (2006).

\bibitem{vanDam2000} W. van Dam, \textit{Nonlocality and Communication Complexity}, Ph.D. thesis, University of Amsterdam, (2002).

\bibitem{vanDam2005} W. van Dam, \textit{Implausible Consequences of Superstrong Nonlocality}, Nat. Comput. \textbf 5, 12 (2013).

\bibitem{Brassard06}
G. Brassard, H. Buhrman, N. Linden, A. A. M\'{e}thot, A. Tapp, and F. Unger, {\it Limit on nonlocality in a world in which communication complexity is not trivial}, Phys. Rev. Lett. \textbf{96}, 250401 (2006).

\bibitem{Pawlowski09}
M. Pawlowski, T. Paterek, D. Kaszlikowski, V. Scarani, A. Winter, M. Zukowski, 	{\it Information causality as a physical principle}, Nature {\bf 461}, 1101 (2009).

\bibitem{Navascues09}
M. Navascu\'{e}s and H. Wunderlich, {\it A glance beyond the quantum model}, Proc. Roy. Soc. Lond. A {\bf 466}, 881 (2009).

\bibitem{OW10}
J. Oppenheim and S. Wehner, {\it The uncertainty principle determines the non-locality of quantum mechanics}, Science {\bf 330}, 1072 (2010).

\bibitem{Gallego11}
R. Gallego, L. E. W\"{u}rflinger, A. Ac\'{i}n, and M. Navascu\'{e}s, {\it Quantum correlations require multipartite information principles}, Phys. Rev. Lett. {\bf 107}, 210403 (2011).

\bibitem{Fritz13}
T. Fritz, A. B. Sainz, R. Augusiak, J. B. Brask, R. Chaves, A. Leverrier, A. Ac\'{i}n, {\it Local orthogonality as a multipartite principle for quantum correlations}, Nature Communications {\bf 4}, 2263 (2013).

\bibitem{PRABoxesDdimesnions} J. Barrett, N. Linden, S. Massar, S. Pironio, S. Popescu, and D. Roberts, \textit{Nonlocal correlations as an information-theoretic resource}, Phys. Rev. A \textbf{71}, 022101 (2005).

\bibitem{PRmultipartite} J. Barrett, and S. Pironio, \textit{Popescu-Rohrlich Correlations as a Unit of Nonlocality}, Phys. Rev. Lett. \textbf{95}, 140401 (2005).

\bibitem{Almeida10}
M. Almeida, J.-D. Bancal, N. Brunner, A. Ac\'{i}n, N. Gisin, and S. Pironio, {\it Guess your neighbour's input: a multipartite non-local game with no quantum advantage}, Phys. Rev. Lett. {\bf 104}, 230404 (2010).

\bibitem{Sainz15}
A. B. Sainz, N. Brunner, D. Cavalcanti, P. Skrzypczyk, T. V\'{e}rtesi, {\it Postquantum steering}, Phys. Rev. Lett. {\bf 115}, 190403 (2015).

\bibitem{ReviewNonlocality} N. Brunner, D. Cavalcanti, S. Pironio, V. Scarani, and S. Wehner, \textit{Bell nonlocality}, Rev. Mod. Phys. \textbf{86}, 419 (2014)

\bibitem{Grangier}
P. Grangier, M. Potasek, and B. Yurke, \textit{Probing the phase coherence of parametrically generated photon pairs: A new test of Bell's inequalities}, Phys. Rev. A {\bf 38}, 3132 (1988).

\bibitem{Tan}
S. Tan, D. Walls, and M. Collett, \textit{Nonlocality of a single photon}, Phys. Rev. Lett. {\bf 66}, 252 (1991).

\bibitem{Gilchrist} 
A. Gilchrist, P. Deuar and M. D. Reid, \textit{Contradiction of Quantum Mechanics with Local Hidden Variables for Quadrature Phase Amplitude Measurements}, Phys. Rev. Lett. {\bf 80}, 3169 (1998).

\bibitem{Munro} 
W. J. Munro and G. J. Milburn, \textit{Characterizing Greenberger-Horne-Zeilinger Correlations in Nondegenerate Parametric Oscillation via Phase Measurements}, Phys. Rev. Lett. {\bf 81}, 4285 (1998).

\bibitem{RootBinning} J. Wenger, M. Hafezi, F. Grosshans, R. Tualle-Brouri, and P. Grangier, \textit{Maximal violation of Bell inequalities using continuous-variable measurements}, Phys. Rev. A {\bf 67}, 012105 (2003). 

\bibitem{SignBinning} 
R. Garc\'ia-Patr\'on, J. Fiur\'a\v{s}ek, N. J. Cerf, J. Wenger, R. Tualle-Brouri, and Ph. Grangier, \textit{Proposal for a Loophole-Free Bell Test Using Homodyne Detection}, Phys. Rev. Lett. \textbf{93}, 130409 (2004).

\bibitem{LeeJaksch} 
S.-W. Lee and D. Jaksch, {\it Maximal violation of tight Bell inequalities for maximal high-dimensional entanglement}, Phys. Rev. A {\bf 80}, 010103(R) (2009).

\bibitem{CFRDSalles} A. Salles, D. Cavalcanti and A. Ac\'{\i}n, \textit{Quantum Nonlocality and Partial Transposition for Continuous-Variable Systems}, Phys. Rev. Lett. \textbf{101}, 040404 (2008).

\bibitem{CFRDmultiSet} E. Shchukin, and W. Vogel, \textit{Quaternions, octonions, and Bell-type inequalities}, Phys. Rev. A \textbf{78}, 032104 (2008).

\bibitem{CFRDSallesLong} A. Salles, D. Cavalcanti, A. Acin, D. Perez-Garcia, and M. M. Wolf, \textit{Bell inequalities from multilinear contractions}, Quant. Inf. Comp. \textbf{10}, 0703 (2010).

\bibitem{CFRD} E. G. Cavalcanti, C. J. Foster, M. D. Reid and P. D. Drummond, \textit{Bell Inequalities for Continuous-Variable Correlations}, Phys. Rev. Lett. {\bf 99}, 210405 (2007).

\bibitem{Braunstein05}
S. L. Braunstein and P. van Loock, {\it Quantum information with continuous variables}, Rev. Mod. Phys. {\bf 77}, 513 (2005).

\bibitem{Ferraro05}
A. Ferraro, S. Olivares, and M. G. A. Paris, {\it Gaussian states in continuous variable quantum information}, (Bibliopolis, Napoli, 2005) ISBN 88-7088-483-X; arXiv: quant-ph/0503237.

\bibitem{Weedbrook12}
C. Weedbrook, S. Pirandola, R. Garc\'{i}a-Patron, N. J. Cerf, T. C. Ralph, J. H. Shapiro, and S. Lloyd, {\it Gaussian Quantum Information}, Reviews of Modern Physics {\bf 84}, 621 (2012).

\bibitem{Masanes} L. Masanes, {\it Tight Bell inequality for d-outcome measurements correlations}, Quantum. Info. Comput. \textbf 3, 345 (2002).

\bibitem{LinProgramBellineq} 
D. Kaszlikowski, P. Gnaci\ifmmode \acute{n}\else \'{n}\fi{}ski, M. \ifmmode \dot{Z}\else \.{Z}\fi{}ukowski, W. Miklaszewski, and A. Zeilinger, {\it Violations of Local Realism by Two Entangled $N$
-Dimensional Systems Are Stronger than for Two Qubits}, Phys. Rev. Lett. \textbf{85}, 4418 (2000).

\bibitem{Zukowski99}
M. Zukowski, D. Kaszlikowski, A. Baturo, and J.-A. Larsson, {\it Strengthening the Bell Theorem: conditions to falsify local realism in an experiment}, arXiv:quant-ph/9910058.

\bibitem{QuantumMeasurement} P. Busch, P. J. Lahti, J. P. Pellonp\"a\"a, and Ylinen, K., \textit{Quantum Measurement} (Springer, Base, 2016).

\bibitem{Pironio2005} S. Pironio, \textit{Lifting Bell inequalities}, J. Math. Phys. \textbf{46}, 062112 (2005).

\bibitem{simon2011convexity} B. Simon, \textit{Convexity: An Analytic Viewpoint}, (Cambridge University Press, 2011).

\bibitem{Haapasalo}
E. Haapasalo, M. Sedl\'{a}k, and M. Ziman, {\it Distance to boundary and minimum-error discrimination}, Phys. Rev. A 89, 062303 (2014).

\bibitem{Kleinmann} M. Kleinmann, A. Cabello, \textit{Quantum Correlations Are Stronger Than All Nonsignalling Correlations Produced by $n$-Outcome Measurements}, Phys. Rev. Lett. \textbf{117}, 150401 (2016).

\bibitem{Holevo} A. S. Holevo, \textit{Statistical definition of observable and the structure of statistical models}, Rep. Math. Phys. \textbf{22}, 385 (1985).

\bibitem{HeinosaariPellonpaa} T. Heinosaari and J.-P. Pellonp\"a\"a, \textit{Canonical phase measurement is pure}, Phys. Rev. A \textbf{80}, 040101(R) (2009).

\bibitem{Pellonpaa} J.-P. Pellonp\"a\"a, \textit{Complete characterization of extreme quantum observables in infinite dimensions} J. Phys. A \textbf{44}, 085304 (2011).


\end{thebibliography}
\end{document}